\documentclass[journal]{IEEEtran}
\usepackage{graphicx}
\usepackage{amsmath} 
\usepackage{amsthm}
\usepackage{tabularx}
\usepackage{xcolor}
\usepackage{amssymb}
\usepackage{tikz}
\usepackage{booktabs}
\usetikzlibrary{decorations.pathreplacing}  
\usetikzlibrary{shapes.arrows}              
\usetikzlibrary{shadows}                    
\usetikzlibrary{calc}                       
\usetikzlibrary{shapes, arrows.meta, positioning, calc, backgrounds, shadows}
\definecolor{ieee_blue}{RGB}{0, 76, 153}
\definecolor{ieee_orange}{RGB}{255, 127, 0}
\definecolor{ieee_red}{RGB}{180, 20, 20}

\definecolor{tech_red}{RGB}{220, 50, 47}    
\definecolor{tech_blue}{RGB}{38, 139, 210}  
\definecolor{tech_gray}{RGB}{101, 123, 131} 
\definecolor{myblue}{RGB}{0, 114, 189}   
\definecolor{myblack}{RGB}{30, 30, 30}   

\usetikzlibrary{arrows.meta, decorations.pathreplacing}

\usepackage{pgfplots}
\pgfplotsset{compat=1.17}
\usepgfplotslibrary{groupplots} 
\usetikzlibrary{patterns}
\usepackage{hyperref}

\usepackage{tabularx}
\usepackage{booktabs}
\usepackage{array}
\usepackage{graphicx}

\usetikzlibrary{backgrounds}
\pgfdeclarelayer{backgrounds}
\pgfsetlayers{backgrounds,main}

\usepackage{placeins}
\FloatBarrier

\theoremstyle{definition} 
\newtheorem{theorem}{Theorem}

\newtheoremstyle{remarkbf} 
{3pt}   
{3pt}   
{\normalfont} 
{}      
{\bfseries} 
{.}     
{0.5em} 
{}      

\theoremstyle{remarkbf}
\newtheorem{remark}{Remark}

\begin{document}

\title{Wireless Streamlet: A Spectrum-Aware and Cognitive Consensus Protocol for Edge IoT}

\author{Taotao~Wang,~\IEEEmembership{Member,~IEEE,}~Long~Shi,~\IEEEmembership{Senior~Member,~IEEE,}~Fang~Liu,~Qing~Yang,~\IEEEmembership{Member,~IEEE,}~and~Shengli~Zhang,~\IEEEmembership{Senior~Member,~IEEE}
\thanks{T.~Wang, F. Liu, Q. Yang and S. Zhang are with the College of Electronics and Information Engineering, Shenzhen University, Shenzhen 518060, China, e-mail: \{ttwang, liuf, yang.qing, zsl\}@szu.edu.cn.}
\thanks{L. Shi is with the School of Electronic and Optical Engineering, Nanjing University of Science and Technology, Nanjing 210094, China, e-mail: longshi@njust.edu.cn.}

}

\markboth{Wireless Streamlet: A Spectrum-Aware and Cognitive Consensus Protocol for Edge IoT}%
{Shell \MakeLowercase{\textit{et al.}}: Bare Demo of IEEEtran.cls for IEEE Journals}

\maketitle

\begin{abstract}
Blockchain offers a decentralized trust framework for the Internet of Things (IoT), yet deploying consensus in spectrum-congested and dynamic wireless edge IoT networks faces fundamental obstacles: traditional BFT protocols are spectrum-ignorant, leading to inefficient resource utilization and fragile progress under time-varying interference. This paper presents \textit{Wireless Streamlet}, a spectrum-aware and cognitive consensus protocol tailored for wireless edge IoT. Building on Streamlet's streamlined structure, we introduce a \textit{Channel-Aware Leader Election (CALE)} mechanism. CALE serves as a verifiable cross-layer cognitive engine that leverages receiver-measured channel state information (CSI) piggybacked in signed votes to derive Byzantine-robust connectivity scores from notarization certificates, and deterministically selects a unique weighted leader per epoch from finalized history, thereby improving proposal dissemination reliability under deep fading. Complementing this cognitive adaptation, Wireless Streamlet exploits the single-hop broadcast medium and a deterministic TDMA voting schedule to achieve linear per-epoch on-air transmissions (slot complexity), ensuring deterministic spectral access. To address the communication-storage trade-off, we further propose a coded dual-chain architecture that decouples header-only consensus (State Chain) from payload data (Data Chain). By employing erasure coding and on-chain integrity commitments, the system minimizes redundant spectrum usage for data retrieval while ensuring availability. Experiments show that Wireless Streamlet achieves higher throughput and lower confirmation latency than representative baselines in lossy environments, while substantially reducing per-node storage, demonstrating the efficacy of integrating cognitive sensing into consensus logic.
\end{abstract}

\begin{IEEEkeywords}
Blockchain, Internet of Things (IoT), Cognitive Consensus, Spectrum Awareness, Coded Storage.
\end{IEEEkeywords}

\IEEEpeerreviewmaketitle

\section{Introduction}
\IEEEPARstart{T}{he} Internet of Things (IoT) connects large numbers of resource-constrained devices to sense and act on real-world events \cite{xu2014internet}. In many deployments, especially industrial and multi-stakeholder settings, devices must share data and coordinate actions without relying on a single trusted operator \cite{fernandez2018review}. This makes tamper-evident logging, auditable state updates, and decentralized trust essential requirements rather than optional features \cite{shi2024blockchain_aided_trust}.

Blockchain technology has emerged as a promising solution to these challenges. As a decentralized ledger, it enables secure data sharing among untrusted entities. In the realm of cognitive communications, blockchain has recently been identified as a critical enabler for secure Dynamic Spectrum Sharing (DSS) among multiple operators \cite{li2023multi} and distributed spectrum trading \cite{cheng2023multi}. Furthermore, it serves as a trust substrate for secure task offloading in Mobile Edge Computing (MEC) networks, ensuring reliable resource allocation \cite{du2025secure}. These attributes make blockchain a compelling candidate for securing the ``Edge of Things'' \cite{codesign_2021,zhu2025moe_blockchain}.

Despite its potential, deploying blockchain in wireless edge IoT networks presents unique challenges. First, IoT devices are inherently resource-constrained, making computation-heavy mechanisms such as PoW impractical \cite{shi2023age_of_work}. Second, BFT-style protocols typically incur high all-to-all communication overhead (often $O(n^2)$), which can severely congest bandwidth-limited wireless networks \cite{xu2021wchain}. 
Third, the wireless spectrum is inherently dynamic. Traditional consensus protocols are typically ``spectrum-agnostic'', lacking the cognitive capability to adapt to fading and interference. This challenge is analogous to the problems faced in Distributed Cooperative Spectrum Sensing, where fading disrupts agreement among secondary users \cite{vosoughi2016trust}. Dropped consensus messages in such lossy environments can stall progress, motivating wireless-native designs that explicitly account for unreliable transmissions \cite{zou2023fastconsensus}.

In addition to communication hurdles, storage limitations pose a significant barrier. Maintaining a full replica of the ledger on every IoT node is infeasible. While lightweight client modes exist, they often sacrifice security or decentralization. Consequently, there is a pressing need for a storage-efficient mechanism that maintains data availability and fault tolerance without overwhelming the limited memory of wireless edge devices \cite{zou2024survey}.

To overcome these challenges, this paper proposes \textit{Wireless Streamlet}, a spectrum-aware consensus protocol tailored for wireless edge IoT networks. Building on Streamlet---a streamlined BFT consensus protocol \cite{chan2020streamlet}, we introduce a \textit{Channel-Aware Leader Election (CALE)} mechanism. CALE serves as a verifiable cross-layer cognitive engine that exploits receiver-observed channel quality to improve block propagation reliability under deep fading. Furthermore, to address storage constraints, we introduce an erasure-coding-based dual-chain architecture. This design decouples consensus from storage, allowing nodes to store only fragments of the ledger while ensuring data remains recoverable via Raptor codes \cite{shokrollahi2006raptor}. In summary, the contributions of this work are as follows:

\begin{itemize}
	
    \item \textbf{Wireless-Native Consensus Protocol:} We propose \textit{Wireless Streamlet}, a streamlined consensus protocol optimized for a single-hop wireless broadcast domain in wireless edge IoT. By leveraging the wireless broadcast nature and a deterministic TDMA voting schedule, we reduce the per-epoch on-air transmission cost (slot complexity) to a linear schedule of $(n+1)$ slots, i.e., $\ensuremath{O(n)}$ on-air transmissions per epoch (or $\ensuremath{O(nK_{\mathrm{tx}})}$ with bounded retransmissions), yielding deterministic epoch latency.

	\item \textbf{Cognitive Spectrum-Aware Leader Election (CALE):} We propose CALE, a verifiable channel-aware leader election that serves as a cross-layer cognitive engine: it leverages receiver-measured CSI (signed in votes) to compute Byzantine-robust connectivity scores from notarization certificates, and deterministically selects a unique weighted leader per epoch from finalized history. CALE improves proposal dissemination reliability under deep fading.
	
	\item \textbf{Storage-Efficient Coded Architecture:} We develop a dual-chain architecture coupled with a Raptor-code-based storage scheme. This design decouples consensus (State Chain) from storage (Data Chain), allowing resource-constrained nodes to maintain data availability with minimal storage footprint while ensuring robustness against Byzantine storage failures.
	
	\item \textbf{Rigorous Theoretical Analysis:} We model the consensus layer over packet erasure channels, derive a conservative lower bound on the per-epoch notarization probability, and use a Markov-chain model to obtain the expected time-to-finality under Streamlet's three-consecutive-notarized rule. We also formulate an objective to minimize expected consensus airtime per finalized block by tuning $K_{\mathrm{tx}}$ and CALE, while decoupling payload availability (erasure coding) from consensus liveness.
	
	\item \textbf{Comprehensive Evaluation:} We implement a prototype on the Bamboo benchmarking platform. Experimental results show improved throughput and more stable latency than representative wired-BFT baselines (e.g., PBFT and HotStuff) under high packet loss, validating the effectiveness of our cross-layer design.
\end{itemize}

The remainder of this paper is organized as follows. Section \ref{sec:related_work} reviews related work. Section \ref{sec:system_model} presents the system model. Section \ref{sec:protocol_design} details the Wireless Streamlet protocol and CALE. Section \ref{sec:encoding_scheme} introduces the coded storage architecture. Section \ref{sec:analysis} provides theoretical analysis, followed by evaluation in Section \ref{sec:evaluation} and conclusions in Section \ref{sec:con}.

\section{Related Work}
\label{sec:related_work}

Consensus protocols constitute the backbone of blockchain systems. We review classical BFT protocols, analyses of wireless bottlenecks, and recent wireless-native designs. This discussion highlights the gaps motivating Wireless Streamlet.

\subsection{Classical BFT Protocols}
The Practical Byzantine Fault Tolerance (PBFT) algorithm \cite{castro1999practical} was the first to provide a feasible solution for consensus in asynchronous systems, but it incurs $O(n^2)$ communication (all-to-all copies), making it ill-suited for bandwidth-constrained IoT environments. Recent advances like HotStuff \cite{hotstuff} and Streamlet \cite{chan2020streamlet} streamline the voting structure and can achieve linear communication in implementations with collectors/aggregation, while naive all-to-all dissemination still leads to quadratic point-to-point copies. 


While these protocols perform well in wired Internet settings, they typically rely on abstractions close to reliable point-to-point channels. Blindly applying them to wireless environments can lead to severe throughput degradation due to the lack of cross-layer awareness \cite{iot_challenges_2019}. 

\subsection{Consensus Analysis in Wireless Networks}
Recent literature has analyzed \textit{why} classical consensus degrades in wireless networks, providing theoretical motivation for wireless-aware designs.
Cao et al. \cite{csma_blockchain_2020} modeled MAC-layer effects on blockchain performance and showed that CSMA/CA backoff can amplify latency and increase security risks (e.g., forking) with growing node density. Ni et al. \cite{resource_needed_2022} derived lower bounds on communication resources (e.g., bandwidth and SNR) required to maintain consistency. 
These works identify key bottlenecks such as contention and limited spectral efficiency, but they mainly focus on modeling and bounds rather than proposing a unified protocol that actively mitigates losses. 

\begin{table*}[t!]
	\centering
	\caption{Qualitative Comparison of Consensus/Blockchain Protocols in Wireless IoT Scenarios.
			The reported communication cost is \emph{qualitative} because different works use different metrics (e.g., point-to-point copies vs. on-air slots/time steps).}
	\label{tab:consensus_comparison}
	\small
	\setlength{\tabcolsep}{4pt}
	\renewcommand{\arraystretch}{1.05}
	
	\resizebox{\textwidth}{!}{%
			\begin{tabular}{lcccc}
					\toprule
					\textbf{Protocol} & \textbf{Communication Cost (Metric)}  & \textbf{Leader Election} & \textbf{Channel Awareness} & \textbf{Loss Resilience} \\
					\midrule
					\textbf{PBFT} \cite{castro1999practical}
					& Quadratic (all-to-all message exchange)
					& Round-Robin
					& No
					& Retransmission \\
					\textbf{Streamlet} \cite{chan2020streamlet}
					& Impl.-dependent (often quadratic P2P copies via gossip/echo)
					& Hash-based
					& No
					& Retransmission \\
					\textbf{Wireless-PBFT} \cite{cluster_pbft}
					& Sub-quadratic via clustering (wireless-oriented)
					& Clustering
					& \textbf{Partial}
					& Retransmission \\
					\textbf{Co-Design} \cite{codesign_2021}
					& Varies (PHY/MAC dependent)
					& PHY-based
					& \textbf{Yes (PHY)}
					& Signal processing / PHY-aided \\
					\midrule
					\textbf{wChain} \cite{xu2021wchain}
					& Multihop SINR-oriented; uses backbone/aggregation (slot-based analysis)
					& Leader-based (BFT-style)
					& \textbf{Yes (SINR/MAC-aware)}
					& Aggregation + recovery mechanisms \\
					\textbf{Fast Consensus} \cite{zou2023fastconsensus}
					& Near-optimal time-steps under SINR (randomized distributed consensus)
					& Leaderless / randomized
					& \textbf{Yes (SINR model)}
					& Built-in tolerance to unreliable transmissions \\
					\textbf{BLOWN} \cite{blown2023}
					& Single-hop; designed under adversarial SINR (security-driven overhead)
					& PoC-based (channel-driven)
					& \textbf{Yes (adversarial SINR)}
					& Anti-jamming / adversarial robustness \\
					\midrule
					\textbf{Wireless Streamlet} (ours)
					& \textbf{Linear on-air slots/epoch (TDMA, header-only consensus)}
					& \textbf{CALE (QC-derived CSI-weighted, unique leader)}
					& \textbf{Yes (cross-layer)}
					& \textbf{Bounded retransmission (consensus) + erasure-coded data plane} \\
					\bottomrule
				\end{tabular}%
		}
\end{table*}

\subsection{Optimization for Wireless Consensus}
To overcome these limitations, research explores topology control and cross-layer co-design.

\subsubsection{Topology and Deployment Optimization}
Sun et al. \cite{deployment_2019} analyzed node deployment strategies to improve coverage and connectivity. Hierarchical protocols such as G-PBFT/Wireless-PBFT \cite{cluster_pbft} partition the network into clusters to reduce overhead. However, deployment optimization is often static, and clustering methods typically treat the wireless link as a black box and rely on ARQ-style retransmissions. 

\subsubsection{Communication-Consensus Co-Design}
Seo et al. \cite{codesign_2021} proposed a co-design framework leveraging physical-layer properties (e.g., over-the-air computation) to accelerate consensus. While promising, PHY-layer co-design often requires strict synchronization or waveform-level modifications, limiting compatibility with commodity IoT hardware. 

\subsubsection{Coded Blockchain and Sharding}
Erasure coding has been explored to address storage and scalability. Polyshard \cite{li2020polyshard}, coded Merkle trees \cite{yu2020coded}, and coded blockchain designs \cite{yang2022coded} improve data availability and reduce storage costs, primarily in wired/sharded settings. Our work adapts coded storage to lossy wireless links and couples it with wireless-native consensus. 

\subsubsection{Wireless-Native Blockchain and Consensus Protocols}
Several protocols are designed explicitly for wireless settings. wChain targets multihop networks under the SINR model and improves efficiency via communication backbones and aggregation primitives~\cite{xu2021wchain}. Fast Consensus studies distributed agreement under unreliable SINR transmissions and achieves near-optimal time complexity in terms of wireless time steps~\cite{zou2023fastconsensus}. BLOWN focuses on a single-hop permissioned blockchain under an adversarial SINR model and emphasizes robustness against jamming and related attacks~\cite{blown2023}.

In contrast, Wireless Streamlet targets a single-hop broadcast IoT cluster and tailors the standard Streamlet BFT protocol to the wireless edge. Specifically, it replaces point-to-point vote dissemination with a deterministic TDMA voting schedule that explicitly \emph{exploits the wireless one-to-many broadcast nature} (one on-air transmission can be received by many nodes), thereby making epoch progress depend on \emph{on-air slot usage} rather than all-to-all message copies. Moreover, Wireless Streamlet introduces channel-aware leader election to improve block propagation under deep fading, and couples consensus with erasure-coded data availability via a coded dual-chain architecture to meet IoT storage constraints under lossy (non-adversarial) packet erasures.

\subsection{Qualitative Comparison}
Wireless Streamlet adopts a lightweight cross-layer design compatible with standard transceivers. Table \ref{tab:consensus_comparison} summarizes key qualitative differences. Since these works optimize different objectives and use different cost metrics (e.g., P2P copies vs. on-air slots vs. time steps), we focus on qualitative comparison and evaluate against representative wired-BFT baselines under our target setting. 

\section{System Model}
\label{sec:system_model}

Before detailing the proposed consensus protocol, we establish the node/adversary assumptions, the wireless channel model, and the timing model. Unlike traditional blockchain systems that assume reliable wired connectivity, our model explicitly captures wireless packet erasures and their impact on consensus availability and latency. We focus on a wireless edge IoT setting where the consensus plane forms a single-hop omni-directional broadcast domain; thus, a single on-air transmission can be overheard by all nodes subject to packet erasures.

\subsubsection{Node Model}
We consider a network of $n$ wireless IoT nodes denoted by the set~$\mathcal{N} = \{1,2,\dots,n\}$. The system tolerates Byzantine faults: up to $f$ nodes may behave maliciously or fail arbitrarily, satisfying $n \ge 3f+1$, which is standard for Byzantine fault tolerant consensus under (partial/weak) synchrony \cite{dwork1988consensus,castro1999practical}.
Each node $i \in \mathcal{N}$ possesses a unique public-private key pair $(\mathsf{pk}_i,\mathsf{sk}_i)$ for digital signatures, ensuring authenticated and non-repudiable messages \cite{castro1999practical}.

\subsubsection{Wireless Channel Model (Packet Erasure Channel)}
A critical distinction of our system is the explicit modeling of the communication medium. We model wireless communication as a packet erasure channel (PEC): in a given slot, a packet is either successfully decoded or erased (lost) \cite{cover2006elements, fragouli2007network}.
We allow heterogeneous link qualities: a transmission attempt from node $i$ to node $j$ succeeds with probability $p_{i\to j}\in(0,1]$.

\begin{itemize}
	\item \textbf{Packet Loss/PER:} The per-attempt success probability can be expressed as
	\begin{equation}
		p_{i\to j} = 1-\mathrm{PER}_{i\to j},
	\end{equation}
	where $\mathrm{PER}_{i\to j}$ is determined by physical-layer conditions (such as SNR under Rayleigh or Rician fading) and MAC-layer interference; we abstract these effects into the PEC parameter $p_{i\to j}$ \cite{goldsmith2005wireless}.
	\item \textbf{Non-degenerate honest connectivity:} For liveness analysis, we assume there exists a constant $p_H>0$ such that for any epoch (and any history), any honest sender-to-honest receiver transmission attempt succeeds with probability at least $p_H$. This type of Bernoulli packet-loss abstraction with a strictly positive success probability is standard in the analysis of networked systems with intermittent/lossy communication \cite{charron2009heard}.
	
    \item \textbf{Consensus-plane broadcast domain:} We assume that \emph{consensus-plane} transmissions for proposal/vote packets occur within a single-hop omni-directional wireless broadcast domain, i.e., every pair of consensus nodes is within mutual radio range so that a transmission can be overheard by all other nodes subject to packet erasures. Multi-hop forwarding, inter-cluster communication, and gateway/backhaul connectivity (used by the storage plane, if present) are not modeled and are out of scope for the consensus analysis.

\end{itemize}

\begin{figure*}[t]
	\centering
	\resizebox{2\columnwidth}{!}{%
		\begin{tikzpicture}[
			node distance=1.5cm and 3cm,
			base_node/.style={
				rectangle,
				draw=gray!40,
				thick,
				rounded corners=3pt,
				minimum width=2.2cm,
				minimum height=0.8cm,
				align=center,
				font=\small\bfseries,
				drop shadow={opacity=0.15, shadow xshift=1pt, shadow yshift=-1pt}
			},
			leader_node/.style={
				base_node,
				top color=white,
				bottom color=ieee_orange!15,
				draw=ieee_orange!80
			},
			worker_node/.style={
				base_node,
				top color=white,
				bottom color=ieee_blue!10,
				draw=ieee_blue!60
			},
			arrow_style/.style={
				-{Stealth[length=2mm]},
				draw=gray!60,
				thick
			},
			stage_label/.style={
				font=\bfseries\small,
				text=black!80,
				yshift=-0.5cm
			}
			]
			
			\node[leader_node] (L1) at (0,0) {Leader};
			\node[worker_node] (N1_1) at (0,-1.2) {Node 1};
			\node[worker_node] (N2_1) at (0,-2.4) {Node 2};
			\node[worker_node] (N3_1) at (0,-3.6) {Node 3};
			
			\node[leader_node] (L2) at (5,0) {Leader};
			\node[worker_node] (N1_2) at (5,-1.2) {Node 1};
			\node[worker_node] (N2_2) at (5,-2.4) {Node 2};
			\node[worker_node] (N3_2) at (5,-3.6) {Node 3};
			
			\node[leader_node] (L3) at (10,0) {Leader};
			\node[worker_node] (N1_3) at (10,-1.2) {Node 1};
			\node[worker_node] (N2_3) at (10,-2.4) {Node 2};
			\node[worker_node] (N3_3) at (10,-3.6) {Node 3};
			
			\node[right=0.5cm of L3, font=\footnotesize\itshape] {Notarized};
			\node[right=0.5cm of N1_3, font=\footnotesize\itshape] {Notarized};
			\node[right=0.5cm of N2_3, font=\footnotesize\itshape] {Notarized};
			\node[right=0.5cm of N3_3, font=\footnotesize\itshape] {Notarized};
			
			\draw[arrow_style, ieee_orange] (L1.east) -- (L2.west);
			\draw[arrow_style, ieee_orange] (L1.east) -- (N1_2.west);
			\draw[arrow_style, ieee_orange] (L1.east) -- (N2_2.west);
			\draw[arrow_style, ieee_orange] (L1.east) -- (N3_2.west);
			
			\draw[arrow_style, ieee_blue] (N1_2.east) -- (L3.west);
			\draw[arrow_style, ieee_blue] (N2_2.east) -- (N1_3.west);
			\draw[arrow_style, ieee_blue] (N3_2.east) -- (N2_3.west);
			\draw[arrow_style, ieee_blue] (L2.east) -- (N3_3.west);
			
			\draw[arrow_style, ieee_blue!30] (N1_2.east) -- (N2_3.west);
			\draw[arrow_style, ieee_blue!30] (N2_2.east) -- (L3.west);
			
			\draw[dotted, thick, gray] (2.5, 0.5) -- (2.5, -4.5);
			\draw[dotted, thick, gray] (7.5, 0.5) -- (7.5, -4.5);
			
			\node[stage_label] at (0, -4.2) {1. Block Proposal};
			\node[stage_label] at (5, -4.2) {2. Node Voting};
			\node[stage_label] at (10, -4.2) {3. Notarization/Finality Update};
			
			\draw[->, thick, gray] (-1, 0.8) -- (10, 0.8) node[right] {Logical Flow (Epoch)};
			
		\end{tikzpicture}%
	} 
	\caption{Logical execution flow of Wireless Streamlet within one epoch. The consensus logic follows Streamlet  \cite{chan2020streamlet}, while the physical transmission is adapted to wireless broadcast and a TDMA schedule.}
	\label{fig:streamlet_stages}
\end{figure*}
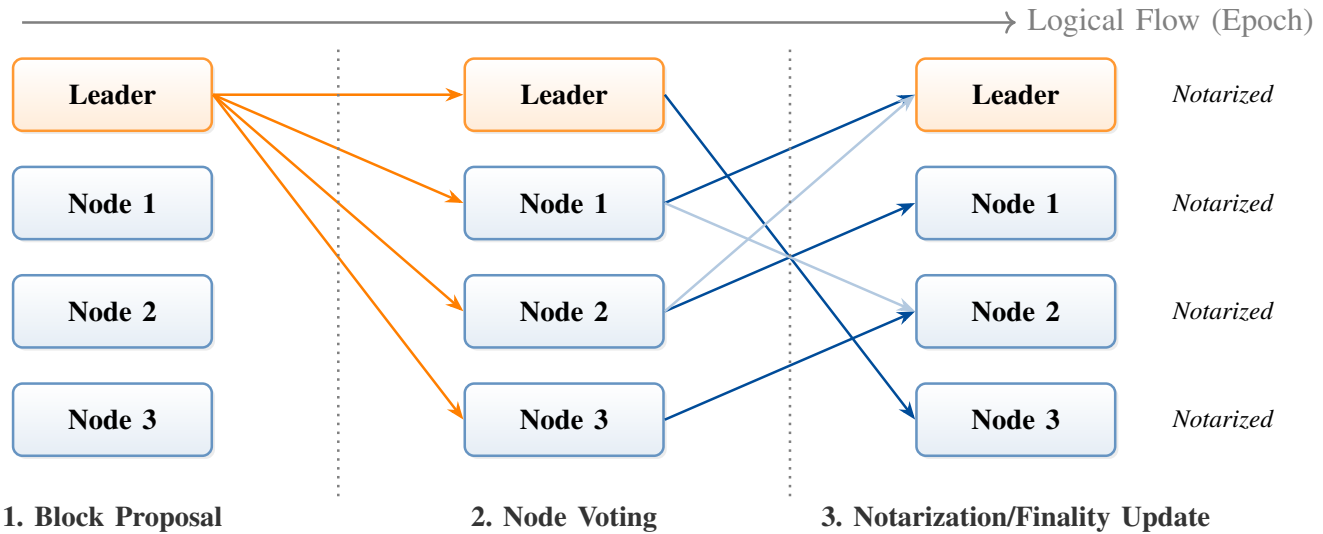

\subsubsection{Broadcast Reception Model}
When a leader broadcasts a packet, different honest receivers may observe different outcomes due to heterogeneous links. Let $X$ denote the number of honest nodes that successfully receive a leader's broadcast in a given slot. With heterogeneous per-receiver success probabilities, $X$ is the sum of independent (but not identically distributed) Bernoulli random variables, i.e., it follows a Poisson-binomial distribution \cite{johnson2005univariate}. For tractable closed-form bounds used later, if the leader-to-honest links satisfy $p_{L\to j}\ge p_H$ for all honest receivers $j$, then
\begin{equation}
	\Pr[X\ge k] \ge \sum_{x=k}^{h} \binom{h}{x} (p_H)^x(1-p_H)^{h-x},
\end{equation}
where $h=n-f$ is the number of honest nodes. Reception events across receivers may be correlated in practice (e.g., common interference). Our analysis uses the above independent lower bound to obtain conservative, analytically tractable guarantees.

\subsubsection{Timing Model (Weak Synchrony with Epochs)}
We adopt a weakly synchronous timing model. Time is divided into epochs, each consisting of a fixed sequence of slots. If a message is not erased by the wireless channel, it is delivered within a bounded delay; otherwise it is treated as non-delivery for that epoch. This abstraction is consistent with classical partial/weak synchrony models used for liveness in Byzantine consensus \cite{dwork1988consensus} and aligns with round/epoch-based blockchain-style protocols \cite{chan2020streamlet}.

\subsubsection{Complexity Metrics}
To avoid ambiguity, we distinguish four distinct metrics: first, logical message objects per epoch, comprising proposals and votes; second, point-to-point message copies, applicable if broadcast is realized by unicast replication; third, wireless on-air transmissions---also referred to as slot complexity---in our single-hop broadcast domain where a single transmission may be overheard by many nodes; and fourth, bit complexity, defined as the total transmitted bytes. Unless explicitly stated otherwise, any $O(\cdot)$ complexity in this paper refers to on-air transmissions, measured in TDMA slots per epoch within the consensus plane.

\section{Wireless Streamlet Protocol Design}
\label{sec:protocol_design}

This section presents the Wireless Streamlet protocol. We first summarize the Streamlet-style consensus logic, and then introduce two wireless-native transmission adaptations: (i) replacing application-layer echoing with broadcast-based vote dissemination under a single-hop broadcast-domain assumption, and (ii) a TDMA-based voting schedule to avoid collisions and provide deterministic epoch latency. We then present CALE, a verifiable cross-layer leader election mechanism that derives connectivity weights from receiver-measured CSI piggybacked in signed votes.

\subsection{Protocol Overview and Logic}
Wireless Streamlet runs with $n$ nodes and tolerates up to $f$ Byzantine nodes, assuming the standard threshold $n \ge 3f+1$ \cite{dwork1988consensus,castro1999practical,chan2020streamlet}. Time is divided into epochs, and each epoch is further divided into a fixed sequence of time slots (defined in Section~\ref{subsec:tdma_voting}). This slotting matches the weak-synchrony model introduced in Section~\ref{sec:system_model}.

The protocol is a state-machine replication style blockchain consensus with three block states: \textit{proposed}, \textit{notarized}, and \textit{finalized}, following Streamlet \cite{chan2020streamlet}. Nodes follow the \emph{longest notarized chain} rule. As illustrated in Fig.~\ref{fig:streamlet_stages}, each epoch consists of three functional stages:

\begin{itemize}
	\item \textbf{Proposal:} A leader proposes a block extending the longest notarized chain.
    \item \textbf{Voting:} Upon receiving a valid proposal from the designated leader, an honest node verifies the block, including ancestry and signature, and broadcasts a signed vote (piggybacking a small receiver-measured CSI tag for CALE).
	\item \textbf{Confirmation:} A block is \textit{notarized} once it gathers at least $2f+1$ votes. A block becomes \textit{finalized} when it is the third block in a sequence of three consecutive notarized blocks in consecutive epochs, as in Streamlet \cite{chan2020streamlet}.
\end{itemize}

\subsection{Wireless-Native Transmission Strategy}
While the logical state machine mirrors Streamlet, the transmission mechanism must be redesigned for wireless edge IoT networks. We introduce two adaptations: reducing redundant vote relaying by leveraging broadcast reception, and enforcing a TDMA schedule for collision-free voting.

\subsubsection{Leveraging Broadcast to Replace Echoing}
\label{subsec:broadcast_no_echo}

In the original Streamlet design intended for point-to-point wired networks, nodes may re-broadcast (echo) votes to ensure that votes quickly become visible to many parties even under selective delivery or scheduling effects \cite{chan2020streamlet}.

In our wireless setting, we assume a single-hop broadcast domain where nodes transmit using omni-directional antennas over a shared channel. Under this assumption, a transmitted packet is physically observable by all nodes within range, subject to independent packet erasures as modeled in Section~\ref{sec:system_model}. This broadcast property provides a \textit{natural dissemination mechanism} for vote signatures: each voter transmits its vote once, and all nodes can overhear it (if not erased).

\begin{itemize}
	\item \textbf{Threat model by wireless nature:}	Wireless broadcast does \emph{not} make selective delivery impossible in an absolute sense (e.g., via directional antennas or power control). Therefore, our design assumes an IoT cluster without beamforming-capable adversaries and focuses on the common omni-directional setting. In this regime, broadcast overhearing substantially reduces the need for explicit application-layer echoing, and the TDMA schedule further increases the probability that a transmitted vote is heard by many nodes within the same epoch.
	
	\item \textbf{Pruning Streamlet for wireless broadcast:}
	As illustrated in Fig.~\ref{fig:selective_transmission}, selective transmission to only a subset of nodes is easy in wired point-to-point networks, whereas wireless broadcasts are naturally overheard by many receivers, subject to erasures. We therefore remove the explicit echo phase of the original Streamlet and let each node broadcast its vote once in its assigned TDMA slot. This change reduces the wireless on-air cost per epoch, i.e., the slot complexity, from all-to-all amplification to linear in $n$, conserving airtime while preserving the notarization logic.
	
\end{itemize}

\begin{figure}[t]
	\centering
    \resizebox{1\columnwidth}{!}{
	\begin{tikzpicture}[
		node distance=1.2cm and 2.5cm,
		honest_node/.style={
			rectangle,
			draw=ieee_blue!80,
			fill=ieee_blue!5,
			thick,
			rounded corners=2pt,
			minimum width=1.8cm,
			minimum height=0.7cm,
			font=\small\bfseries
		},
		malicious_node/.style={
			rectangle,
			draw=red!80,
			fill=red!5,
			thick,
			rounded corners=2pt,
			minimum width=1.8cm,
			minimum height=0.7cm,
			font=\small\bfseries,
			text=red!80
		},
		arrow_ok/.style={
			-{Stealth},
			draw=ieee_blue,
			thick
		},
		arrow_bad/.style={
			-{Stealth},
			draw=red,
			dashed,
			thick
		},
		cross/.style={
			cross out,
			draw=red,
			minimum size=4mm,
			thick
		}
		]
		
		\node[honest_node] (H1) at (0, 0) {Node A};
		\node[honest_node] (H2) at (0, -1.2) {Node B};
		\node[malicious_node] (M1) at (0, -2.8) {Malicious X};
		
		\node[honest_node] (R1) at (4, 0) {Node A};
		\node[honest_node] (R2) at (4, -1.2) {Node B};
		\node[honest_node] (R3) at (4, -2.4) {Node C};
		\node[honest_node] (R4) at (4, -3.6) {Node D};
		
		\draw[arrow_bad] (M1.east) -- (R1.west) node[midway, above, sloped, font=\tiny, text=red] {Msg 1};
		\draw[arrow_bad] (M1.east) -- (R2.west);
		
		\draw[arrow_bad, opacity=0.3] (M1.east) -- (R3.west) node[midway, fill=white] (X1) {};
		\node[cross] at (X1) {};
		
		\draw[arrow_bad, opacity=0.3] (M1.east) -- (R4.west) node[midway, fill=white] (X2) {};
		\node[cross] at (X2) {};
		
		\node[align=center, font=\footnotesize] at (0, -4.2) {\textbf{Sender Phase}};
		\node[align=center, font=\footnotesize] at (4, -4.2) {\textbf{Receiver Phase}};
		
		\draw[dotted, thick, gray] (2, 0.5) -- (2, -4.5);
		
	\end{tikzpicture}
    }

	\caption{Selective transmission (delivery to only a subset) is easy in wired point-to-point networks. In a single-hop omni-directional wireless broadcast domain, transmissions are naturally overheard by many nodes (subject to erasures), reducing the need for an explicit application-layer echo phase.}

	\label{fig:selective_transmission}
\end{figure}
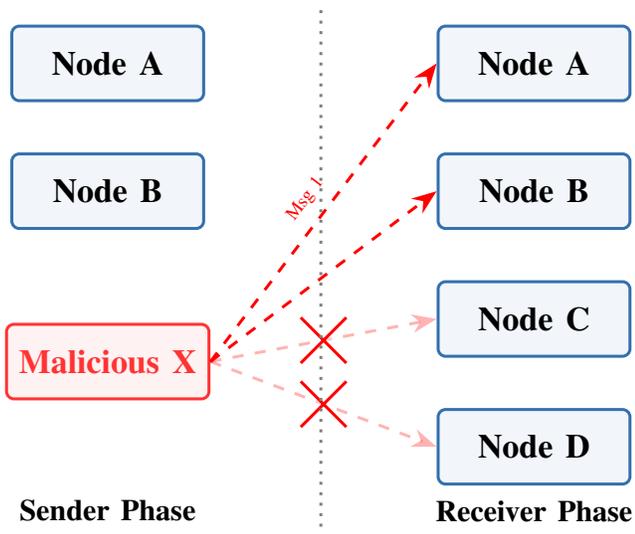

\subsubsection{TDMA-Based Deterministic Voting}
\label{subsec:tdma_voting}

A second challenge arises in the voting stage. If all nodes broadcast votes using contention-based random access (e.g., CSMA/CA), packet collisions and random backoff can cause highly variable latency, which is undesirable for epoch-based consensus \cite{ieee80211_2020}.

To mitigate this, we adopt a TDMA schedule. Fig.~\ref{fig:tdma_schedule} illustrates the slot structure. Each epoch is divided into $(n+1)$ slots: the leader uses Slot~0 to broadcast the proposal, followed by one dedicated voting slot for each node. The epoch duration is therefore deterministic:
\begin{equation}
	T_{\mathrm{epoch}} = (n+1)T_{\mathrm{slot}} + T_{\mathrm{guard}}.
	\label{eq:epoch_duration}
\end{equation}
which implies a linear slot complexity of $\ensuremath{O(n)}$ on-air transmissions per epoch in the consensus plane. This design provides:
\begin{itemize}
	\item \textbf{Collision avoidance:} orthogonal time slots eliminate vote collisions.
	\item \textbf{Deterministic latency:} the maximum time to complete one epoch is bounded by $T_{\mathrm{epoch}}$.
\end{itemize}

\begin{figure}[t]
	\centering
    \resizebox{1\columnwidth}{!}{
	\begin{tikzpicture}[xscale=1.3, yscale=0.8, font=\small]
		\draw[->, thick, gray] (0, 0) -- (5, 0) node[right] {Time $t$};
		\draw[thick, gray] (0, 0) -- (0, 5);
		
		\node[anchor=east] at (0, 4) {Leader};
		\node[anchor=east] at (0, 3) {Node 1};
		\node[anchor=east] at (0, 2) {Node 2};
		\node[anchor=east] at (0, 1) {Node 3};
		
		\tikzstyle{slot_leader} = [fill=ieee_orange!30, draw=ieee_orange, thick, rounded corners=2pt]
		\tikzstyle{slot_node} = [fill=ieee_blue!20, draw=ieee_blue, thick, rounded corners=2pt]
		
		\filldraw[slot_leader] (0.2, 3.6) rectangle (1.8, 4.4); \node at (1, 4) {Proposal};
		\filldraw[slot_node] (1.8, 2.6) rectangle (2.8, 3.4); \node at (2.3, 3) {Vote};
		\filldraw[slot_node] (2.8, 1.6) rectangle (3.8, 2.4); \node at (3.3, 2) {Vote};
		\filldraw[slot_node] (3.8, 0.6) rectangle (4.8, 1.4); \node at (4.3, 1) {Vote};
		
		\draw[dashed, gray!50] (1.8, 0) -- (1.8, 4.5);
		\draw[dashed, gray!50] (2.8, 0) -- (2.8, 4.5);
		\draw[dashed, gray!50] (3.8, 0) -- (3.8, 4.5);
		\draw[dashed, gray!50] (4.8, 0) -- (4.8, 4.5);
		
		\node[below, gray] at (1, 0) {Slot 0};
		\node[below, gray] at (2.3, 0) {Slot 1};
		\node[below, gray] at (3.3, 0) {Slot 2};
		\node[below, gray] at (4.3, 0) {Slot 3};
	\end{tikzpicture}
    }

	\caption{TDMA-based sequential transmission schedule (illustrative). Slot~0 is reserved for the leader's proposal; each node broadcasts its vote in a dedicated slot.}

\label{fig:tdma_schedule}
\end{figure}
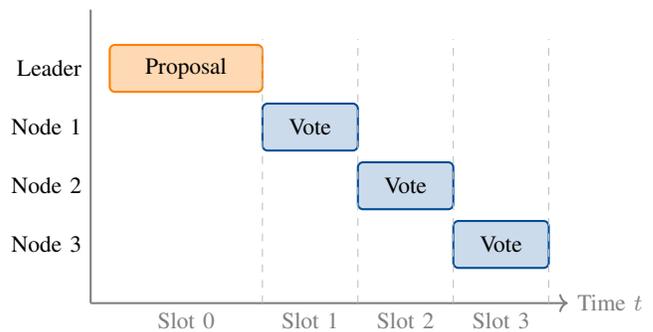

\subsection{Channel-Aware Leader Election (CALE)}
\label{subsec:cale}

In canonical Streamlet, leaders are selected pseudo-randomly per epoch \cite{chan2020streamlet}, which can be inefficient in dynamic wireless environments where link quality fluctuates over time. A ``spectrum-ignorant'' protocol may elect a leader in deep fading, wasting the proposal slot and reducing the probability of collecting a quorum within the same epoch.

We propose CALE, a lightweight and cognitive cross-layer mechanism that biases leader selection toward nodes with better broadcast connectivity. Crucially, CALE is designed to be \emph{verifiable}: it derives leader connectivity from receiver-measured channel state information (CSI) that is cryptographically bound to signed votes, so that other nodes can validate the score from the notarization certificate.

\subsubsection{Receiver-Side Spectrum Sensing and Verifiable Quality Metric}
During Slot~0 of epoch $e$, the designated leader $L(e)$ broadcasts the proposal (a compact State Block header). Each node $j$ that successfully receives the proposal can measure a receiver-side CSI indicator, such as SNR or RSSI, from standard PHY measurements (e.g., preamble/pilots). Let
$\tilde{\gamma}_{L(e)\to j}(e)$ denote node $j$'s measured CSI for the proposal transmission from $L(e)$ in epoch $e$.\footnote{In practice, $\tilde{\gamma}$ can be a low-cost quantized SNR/RSSI (e.g., 8--16 bits), or even an MCS index/link-quality indicator (LQI) provided by commodity radios. This adds negligible overhead to the vote message and requires no additional transmissions.}  

\textbf{Vote Format with CSI Tag}:
When node $j$ votes in its TDMA slot for the proposal of epoch $e$, it broadcasts a signed vote that \emph{piggybacks} the measured CSI:
\begin{equation}
	\mathsf{Vote}_j(e) \triangleq \Big( e,\; \mathsf{Hash}(B_e),\; \tilde{\gamma}_{L(e)\to j}(e) \Big)_{\mathsf{sk}_j},
	\label{eq:vote_with_csi}
\end{equation}
where $B_e$ is the proposed State Block header in epoch $e$ and $(\cdot)_{\mathsf{sk}_j}$ denotes a digital signature by node $j$. Since $\tilde{\gamma}_{L(e)\to j}(e)$ is measured at the receiver and bound to $j$'s signature, a Byzantine leader cannot fabricate or inflate these CSI tags.

\textbf{Verifiable Connectivity Score from QC}:
Let $\mathsf{QC}_e$ denote the notarization certificate (quorum) for $B_e$, and let $\mathcal{Q}_e$ be the set of voters whose signed votes are included in $\mathsf{QC}_e$ (thus $|\mathcal{Q}_e|\ge 2f+1$ when notarized). We define the leader's spectral connectivity score as a \emph{Byzantine-robust} aggregate over the CSI tags contained in $\mathsf{QC}_e$:
\begin{equation}
	\Omega_{L(e)}(e)
	\triangleq
	\operatorname{median}_{j \in \mathcal{Q}_e}\;
	\log_2\!\Big( 1 + \tilde{\gamma}_{L(e)\to j}(e) \Big).
	\label{eq:connectivity_score}
\end{equation}
The median aggregator tolerates up to $f$ arbitrarily misreported CSI values among $\mathcal{Q}_e$ (Byzantine voters), while remaining fully verifiable from $\mathsf{QC}_e$. Therefore, CALE eliminates the ``self-report'' vulnerability: $\Omega_{L(e)}(e)$ can be recomputed and checked by any node that verifies $\mathsf{QC}_e$.

\textbf{Maintaining Per-node Weights from Finalized History}:
Because $\Omega_i$ is observed only when node $i$ serves as leader, each node maintains a deterministic weight table from the finalized State Chain. Let $\Omega_i^{\mathrm{fin}}(e)$ denote node $i$'s most recent \emph{finalized} connectivity score prior to epoch $e$ (initialized to a common constant, e.g., $\Omega_i^{\mathrm{fin}}(0)=1$ if unseen). Define a global baseline
\begin{equation}
	\bar{\Omega}(e) \triangleq \frac{1}{n}\sum_{i=1}^{n} \Omega_i^{\mathrm{fin}}(e),
	\quad
	w_i(e) \triangleq \frac{\max\{\Omega_i^{\mathrm{fin}}(e),\Omega_{\min}\}}{\bar{\Omega}(e)},
	\label{eq:weight_def}
\end{equation}
where $\Omega_{\min}>0$ is a small floor to avoid degenerate weights at cold start. Since $\Omega_i^{\mathrm{fin}}(e)$ is derived from \emph{finalized} certificates, all honest nodes compute identical $w_i(e)$.

\subsubsection{Deterministic Weighted Unique-Leader Rule}
\label{subsec:cale_unique_leader}

The threshold-based ``self-election'' rule (which may yield zero or multiple leaders) is incompatible with our TDMA design where Slot~0 must have a unique transmitter. CALE therefore uses a deterministic \emph{unique-winner} rule that all nodes can compute consistently.

For epoch $e$, each node $i$ derives a public pseudo-random value
$u_i(e) \in (0,1]$ using a normalized hash:
$u_i(e) \triangleq \mathsf{Hash}_{nor}(e \,\|\, \mathsf{pk}_i)$.
We then define a weighted priority score
\begin{equation}
\begin{aligned}
	u_i(e) &\triangleq \mathsf{Hash}_{nor}(e \,\|\, \mathsf{pk}_i),\\
	\rho_i(e) &\triangleq \frac{-\ln u_i(e)}{w_i(e)^{\alpha}}, \qquad \alpha>0,\\
	L(e) &\triangleq \arg\min_{i \in \mathcal{N}} \rho_i(e),
\end{aligned}
\label{eq:cale_rule}
\end{equation}
where $\alpha>0$ is a sensitivity parameter. A deterministic tie-breaker (e.g., by smaller $\mathsf{pk}_i$) can be applied if needed. This rule guarantees \emph{exactly one} leader per epoch by construction, while biasing selection toward nodes with larger (verifiable) weights $w_i(e)$. Intuitively, larger $w_i(e)$ decreases $\rho_i(e)$ in expectation, increasing node $i$'s chance of being selected.

\begin{remark}[Consistency under temporary notarized forks]
To avoid inconsistent leader derivation caused by transient notarized forks, all nodes derive the weight table $\{w_i(e)\}$ from the latest \emph{finalized} State Chain checkpoint available at the beginning of epoch $e$, and hold it constant between checkpoints. Since finalized history is common to all honest nodes under Streamlet safety \cite{chan2020streamlet}, this ensures that all honest nodes compute the same unique leader $L(e)$ and thus preserve collision-free TDMA Slot~0.
\end{remark}

\begin{remark}[Overhead of CALE]
CALE introduces no additional on-air transmissions: the only extra information is the small CSI tag $\tilde{\gamma}$ piggybacked in the existing vote packet (default 1~KB), and the leader selection computation is local.
\end{remark}

\begin{table*}[t]
	\centering
	\caption{Communication overhead comparison in a wireless shared medium (decentralized setting). ``Transmissions'' refer to \emph{on-air transmission attempts} in the single-hop wireless medium.}
	\label{tab:complexity_comparison}
	\begin{tabular}{lccc}
		\toprule
		\textbf{Protocol} &
		\textbf{Per-epoch transmissions} &
		\textbf{Channel access} &
		\textbf{Collision behavior} \\
		\midrule
		Standard Streamlet + gossip/echo (wireless, CSMA/CA)  &
	    \ensuremath{O(n^2)} (worst-case) &
		Random access (e.g., CSMA/CA)&
		High contention / backoff \cite{bianchi2000performance} \\
		\textbf{Wireless Streamlet} &
		\ensuremath{O(n)} (or \ensuremath{O(n\,K_{\mathrm{tx}})}) &
		TDMA schedule &
		\textbf{Collision-free by design} \\
		\bottomrule
	\end{tabular}
\end{table*}

\subsection{Communication Efficiency Analysis}
\label{subsec:complexity}

This subsection compares the communication overhead of Wireless Streamlet against a baseline deployment of Streamlet in a decentralized wireless ad-hoc setting; Table~\ref{tab:complexity_comparison} summarizes the key differences in per-epoch transmissions and wireless-medium behavior. We distinguish (i) \emph{logical message complexity} (how many protocol messages are generated) from (ii) \emph{wireless medium behavior} (contention/collisions and their impact on effective latency).

\subsubsection{Baseline: Standard Streamlet over Decentralized Wireless Ad-Hoc}
Streamlet's notarization requires that votes become visible to a large fraction of nodes (a quorum of at least \ensuremath{2f+1}) in each epoch \cite{chan2020streamlet}. In a decentralized wireless ad-hoc network without a trusted aggregator, a natural implementation is that each node disseminates its vote and, in many designs, re-disseminates votes it learns (echo/gossip) to accelerate visibility under selective delivery and scheduling uncertainty \cite{chan2020streamlet}.

\begin{itemize}
	\item \textbf{Logical objects vs. dissemination cost:} Streamlet generates one proposal and (at most) one vote per node per epoch, i.e., \emph{logical message objects} are $\ensuremath{O(n)}$. However, making votes visible to many parties in a decentralized setting typically relies on all-to-all dissemination, often realized via unicast replication or gossip protocols. Consequently, this approach can incur $O(n^2)$ point-to-point copies or on-air transmissions per epoch in the worst case.

	\item \textbf{Wireless MAC penalty under random access:} In a shared wireless medium using contention-based access (e.g., CSMA/CA as in IEEE~802.11 DCF), concurrent transmission attempts lead to collisions and backoff, and the throughput degrades as the number of contending nodes increases \cite{ieee80211_2020,bianchi2000performance}. Consequently, the \emph{effective} epoch latency can grow sharply with \ensuremath{n}, beyond what is suggested by message counts alone.
\end{itemize}

\subsubsection{Proposed Wireless Streamlet}
Wireless Streamlet reduces both the number of transmissions and the contention-induced overhead by aligning consensus communication with wireless broadcast and a collision-free schedule.

\begin{itemize}
	\item \textbf{No explicit echo phase (broadcast-based vote visibility):} Under the single-hop omni-directional broadcast-domain assumption (Section~\ref{subsec:broadcast_no_echo}), each vote is transmitted once and can be overheard by all nodes within range unless channel erasures occur. This removes the need for an explicit application-layer echoing step that otherwise amplifies vote traffic.
	\item \textbf{Deterministic slot complexity via TDMA:} Each epoch is divided into exactly \ensuremath{(n+1)} transmission opportunities (one proposal slot plus \ensuremath{n} vote slots). Hence, the number of scheduled on-air transmissions per epoch scales linearly with \ensuremath{n}. TDMA is a standard approach to eliminate collisions by orthogonalizing channel access in time \cite{tse2005fundamentals}.
	\item \textbf{Retransmission factor \ensuremath{K_{\mathrm{tx}}}:} To cope with packet erasures, each scheduled slot may allow up to \ensuremath{K_{\mathrm{tx}}} retransmissions. Importantly, \ensuremath{K_{\mathrm{tx}}} is chosen to meet a reliability target under the erasure model (Section~\ref{sec:system_model}) and does not introduce an all-to-all amplification; thus the total scheduled transmissions per epoch are \ensuremath{O(n\,K_{\mathrm{tx}})}.
\end{itemize}
Note that the number of logical votes remains $\ensuremath{O(n)}$; the gain comes from realizing dissemination with one-to-many wireless broadcast and collision-free scheduling.

Overall, compared with a decentralized Streamlet deployment that induces \ensuremath{O(n^2)} transmissions and suffers contention under CSMA/CA, Wireless Streamlet achieves \ensuremath{O(n)} scheduled transmissions (or \ensuremath{O(n\,K_{\mathrm{tx}})} with bounded retransmissions) and avoids collision-induced backoff by construction.

\section{Lightweight Block Encoding and Storage Architecture}
\label{sec:encoding_scheme}

To address the storage and bandwidth constraints of wireless IoT networks, we propose a lightweight data architecture for Wireless Streamlet that \emph{decouples consensus from bulk data storage}. The key idea is to keep the consensus-facing ledger (State Chain) compact, while storing transaction payloads (Data Chain) in an erasure-coded form across storage nodes. Crucially, the State Chain \emph{cryptographically commits} to the corresponding Data Chain content, so that later decoding and auditing remain verifiable.

\subsection{Why Fountain/Raptor Codes for IoT Payloads}
\label{subsec:raptor_rationale}

Raptor codes are adopted as a practical rateless erasure code with near-linear-time encoding/decoding, avoiding the heavy finite-field operations typical of fixed-rate MDS codes (e.g., Reed--Solomon) on low-power IoT devices \cite{wicker1999reed,shokrollahi2006raptor}. Their rateless property lets the storage layer generate and distribute just enough symbols to meet an availability target, while the consensus layer stores only compact commitments to the payload (Section~\ref{subsec:dual_chain_commitment}). In practice, this reduces computation on edge devices and lowers per-node payload storage by distributing coded symbols across storage participants.

\subsection{Dual-Chain Architecture and Node Roles}
\label{subsec:dual_chain_commitment}

To prevent large payload dissemination from congesting the wireless broadcast domain, we bifurcate the ledger into two logically linked chains (Fig.~\ref{fig:block_structure}):

\begin{enumerate}
	\item \textbf{Data Chain (payload storage):} contains the transaction payload for each block, encoded into symbols and stored across storage nodes.
	\item \textbf{State Chain (consensus metadata):} contains only lightweight metadata required for consensus and verification, including:
	(i) the parent pointer, (ii) a state commitment (e.g., a Merkle state root), and
	(iii) a \emph{data commitment} that binds the State Block to the corresponding payload encoding.
\end{enumerate}

Each payload block $p$ is anchored to the State Chain by including in its corresponding State Block (i) a payload identifier $id_p$ (e.g., the hash of a canonical payload header) and (ii) a commitment $C_p$ to the verification structure of its encoded symbols (Section~\ref{subsec:integrity_hashing}). Since $(id_p, C_p)$ is notarized/finalized by Streamlet, it serves as an immutable consensus anchor: any subsequent payload retrieval or reconstruction must verify against this pair.

The separation yields a natural split of responsibilities, as illustrated in Fig.~\ref{fig:system_architecture}. \textbf{Consensus nodes} at the IoT edge maintain the State Chain, run leader election and voting, and store only compact commitments, while \textbf{storage nodes} in the backhaul/cloud store encoded symbols of the Data Chain and serve them on demand. Gateways in Fig.~\ref{fig:system_architecture} belong to the \emph{storage plane} for data dissemination and backhaul and are not part of the consensus plane nor its broadcast assumptions.

\begin{figure}[t]
	\centering
    \resizebox{1\columnwidth}{!}{
	\begin{tikzpicture}[
		scale=0.9, transform shape,
		font=\sffamily,
		node distance=0.2cm,
		outer_node/.style={rectangle, draw=gray!60, thick, fill=white, rounded corners=4pt, minimum width=4.0cm, minimum height=2.6cm, inner sep=5pt, drop shadow={opacity=0.1}},
		style_header/.style={rectangle, draw=ieee_orange!80, fill=ieee_orange!10, thick, minimum width=3.6cm, minimum height=0.8cm, rounded corners=2pt, align=center, font=\footnotesize\bfseries},
		style_body/.style={rectangle, draw=ieee_blue!80, fill=ieee_blue!10, thick, minimum width=3.6cm, minimum height=1.2cm, rounded corners=2pt, align=center, font=\footnotesize},
		style_inner_box/.style={rectangle, draw=ieee_orange!50, fill=white, minimum width=1.1cm, minimum height=0.5cm, rounded corners=1pt, font=\tiny, inner sep=1pt}
		]
		\node[outer_node] (DataBlockContainer) at (0,0) {};
		\node[style_header] (D_Header) at ($(DataBlockContainer.north)-(0, 0.55)$) {};
		\node[anchor=west, font=\footnotesize\bfseries] at ($(D_Header.west)+(0.1,0)$) {Block Header};
		\node[style_inner_box, anchor=east] at ($(D_Header.east)-(0.1,0)$) {Parent Hash};
		\node[style_body] (D_Body) at ($(DataBlockContainer.south)+(0, 0.75)$) {Transaction Payload\\(Erasure Coded)};
		\node[below=0.1cm of DataBlockContainer, font=\small\bfseries] {Data Block (Storage)};
		
		\node[outer_node] (StateBlockContainer) at (4.5,0) {};
		\node[style_header] (S_Header) at ($(StateBlockContainer.north)-(0, 0.55)$) {};
		\node[anchor=west, font=\footnotesize\bfseries] at ($(S_Header.west)+(0.1,0)$) {Block Header};
		\node[style_inner_box, anchor=east] at ($(S_Header.east)-(0.1,0)$) {Parent Hash};
		\node[style_body] (S_Body) at ($(StateBlockContainer.south)+(0, 0.75)$) {\textbf{State Root} \&\\ \textbf{Data Commitment} $C_p$\\(Merkle Root)};
		\node[below=0.1cm of StateBlockContainer, font=\small\bfseries] {State Block (Consensus)};
	\end{tikzpicture}
    }

	\caption{Proposed dual-chain structure. The \textit{Data Chain} stores erasure-coded transaction payloads, while the \textit{State Chain} maintains consensus metadata and a compact commitment $C_p$ (Merkle root) that anchors payload integrity.}
	\label{fig:block_structure}

\end{figure}
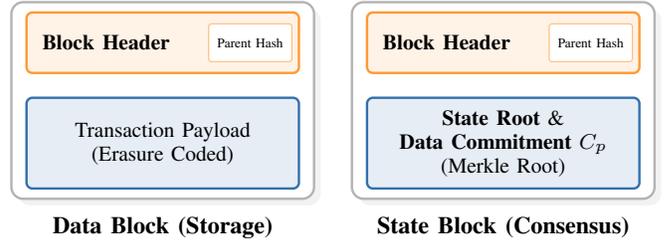

\begin{figure}[t]
	\centering
	\resizebox{1\columnwidth}{!}{%
		\begin{tikzpicture}[
			scale=0.9, transform shape,
			node distance=1.5cm,
			iot_node/.style={
				circle,
				draw=ieee_blue,
				thick,
				fill=white,
				minimum size=1.0cm,
				align=center,
				inner sep=1pt,
				font=\scriptsize\bfseries,
				drop shadow={opacity=0.15, shadow xshift=1pt, shadow yshift=-1pt}
			},
			server_node/.style={
				rectangle,
				draw=black!60,
				fill=gray!5,
				thick,
				minimum width=0.8cm,
				minimum height=1.0cm,
				rounded corners=1pt,
				drop shadow={opacity=0.2}
			},
			tower_node/.style={
				isosceles triangle,
				isosceles triangle apex angle=40,
				draw=black!70,
				fill=white,
				thick,
				shape border rotate=90,
				minimum height=1.4cm,
				minimum width=0.9cm,
				path picture={
					\foreach \y in {0.2, 0.5, 0.8} {
						\draw[gray!50, thin]
						($(path picture bounding box.south west)!\y!(path picture bounding box.north)$) --
						($(path picture bounding box.south east)!\y!(path picture bounding box.north)$);
					}
					\draw[gray!50, thin] (path picture bounding box.south) -- (path picture bounding box.north);
				}
			},
			wireless_link/.style={
				draw=ieee_blue!80,
				dashdotted,
				thick
			},
		wireless_peer/.style={
			draw=ieee_blue!35,
			dashed,
			line width=0.6pt
		},
			wired_link/.style={
				draw=red!70,
				thick
			}
			]
			
			\begin{pgfonlayer}{backgrounds}
				\fill[ieee_blue!6] (3.5,4.9) ellipse [x radius=4.6cm, y radius=1.55cm];
				\draw[ieee_blue!35, dashed, thick] (3.5,4.9) ellipse [x radius=4.6cm, y radius=1.55cm];
			\end{pgfonlayer}
			\node[font=\scriptsize\sf, text=ieee_blue!70] at (3.5,6.3) {Single-hop broadcast domain (mutual overhearing)};
			
			\node[tower_node] (Gateway1) at (1.5, 2.5) {};
			\node[tower_node] (Gateway2) at (5.5, 2.5) {};
			
			\foreach \ap in {Gateway1, Gateway2} {
				\draw[thick, black!70] (\ap.north) -- ++(0, 0.3);
				\foreach \r in {0.2, 0.4} {
					\draw[ieee_blue!50, thick] ($(\ap.north)+(0,0.3)$) arc (90:150:\r);
					\draw[ieee_blue!50, thick] ($(\ap.north)+(0,0.3)$) arc (90:30:\r);
				}
			}
			\node[below=0.1cm of Gateway1, font=\scriptsize\sf] {Gateway 1};
			\node[below=0.1cm of Gateway2, font=\scriptsize\sf] {Gateway 2};
			
			\node[iot_node] (N1) at (0.9, 5.2) {IoT\\Node};
			\node[iot_node] (N2) at (2.2, 5.7) {IoT\\Node};
			\node[iot_node] (N5) at (3.5, 5.1) {IoT\\Node};
			\node[iot_node] (N3) at (4.8, 5.7) {IoT\\Node};
			\node[iot_node] (N4) at (6.1, 5.2) {IoT\\Node};
			\foreach \n in {N1,N2,N3,N4,N5} {
				\draw[ieee_blue!25, thin] (\n) circle [radius=0.55cm];
			}
			
			\node[server_node] (S1) at (0.2, 0.5) {};
			\node[server_node] (S2) at (3.5, 0.0) {};
			\node[server_node] (S3) at (6.8, 0.5) {};
			
			\foreach \s in {S1, S2, S3} {
				\foreach \y in {-0.2, 0.0, 0.2} {
					\draw[gray!30] ($(\s.center) + (-0.25, \y)$) -- ($(\s.center) + (0.25, \y)$);
				}
				\node[below=0.05cm of \s, font=\tiny\sf] {Storage};
			}
			

			
			\draw[wireless_link] (N1) -- ($(Gateway1.north)+(0,0.3)$);
			\draw[wireless_link] (N2) -- ($(Gateway1.north)+(0,0.3)$);
			\draw[wireless_link] (N5) -- ($(Gateway1.north)+(0,0.3)$);
			
			\draw[wireless_link] (N5) -- ($(Gateway2.north)+(0,0.3)$);
			\draw[wireless_link] (N3) -- ($(Gateway2.north)+(0,0.3)$);
			\draw[wireless_link] (N4) -- ($(Gateway2.north)+(0,0.3)$);
			
			\draw[wired_link] (S1) -- (Gateway1.south);
			\draw[wired_link] (S2) -- (Gateway1.south);
			\draw[wired_link] (S2) -- (Gateway2.south);
			\draw[wired_link] (S3) -- (Gateway2.south);
			
			\draw[wired_link] (S1) -- (S2);
			\draw[wired_link] (S2) -- (S3);

			\tikzset{
				state_flow/.style={->, thick, dashed, draw=ieee_orange!85},
				data_flow/.style={->, thick, dashed, draw=ieee_blue!70}
			}
			
			\draw[state_flow]
			(N5) .. controls (3.6,3.8) and (3.0,3.2) .. (Gateway1.north)
			node[midway, right, font=\tiny, text=ieee_orange!85] {State Block (id\_p, C\_p)};
			
			\draw[data_flow]
			(Gateway1.south) .. controls (2.2,1.6) and (1.2,1.2) .. (S2.north)
			node[midway, left, font=\tiny, text=ieee_blue!70] {Encoded symbols q\_i};

			\draw[wireless_peer] (N1) -- (N2);
			\draw[wireless_peer] (N1) -- (N3);
			\draw[wireless_peer] (N1) -- (N4);
			\draw[wireless_peer] (N1) -- (N5);
			
			\draw[wireless_peer] (N2) -- (N3);
			\draw[wireless_peer] (N2) -- (N4);
			\draw[wireless_peer] (N2) -- (N5);
			
			\draw[wireless_peer] (N3) -- (N4);
			\draw[wireless_peer] (N3) -- (N5);
			
			\draw[wireless_peer] (N4) -- (N5);

			\node[anchor=west, font=\scriptsize] at (6.9, 3.2) {\textbf{Legend}};
		    \draw[wireless_peer] (6,2.9) -- (6.7,2.9);
			\node[anchor=west, font=\scriptsize] at (6.8,2.9) {Wireless (iot-iot)};
			\draw[wireless_link] (6,2.6) -- (6.7,2.6);
			\node[anchor=west, font=\scriptsize] at (6.8,2.6) {Wireless (iot-gatway)};
			\draw[wired_link] (6,2.3) -- (6.7,2.3);
			\node[anchor=west, font=\scriptsize] at (6.8,2.3) {Backhaul/wired};
			\draw[state_flow] (6,2.0) -- (6.7,2.0);
			\node[anchor=west, font=\scriptsize] at (6.8,2.0) {State/consensus flow};
			\draw[data_flow] (6,1.7) -- (6.7,1.7);
			\node[anchor=west, font=\scriptsize] at (6.8,1.7) {Data/storage flow};
			
		\end{tikzpicture}
	}

	\caption{System architecture. Consensus nodes (IoT devices) form a \emph{single-hop wireless broadcast cluster} for State Chain agreement, while storage nodes (servers) maintain Data Chain persistence via gateways/backhaul.}
	
	\label{fig:system_architecture}
\end{figure}
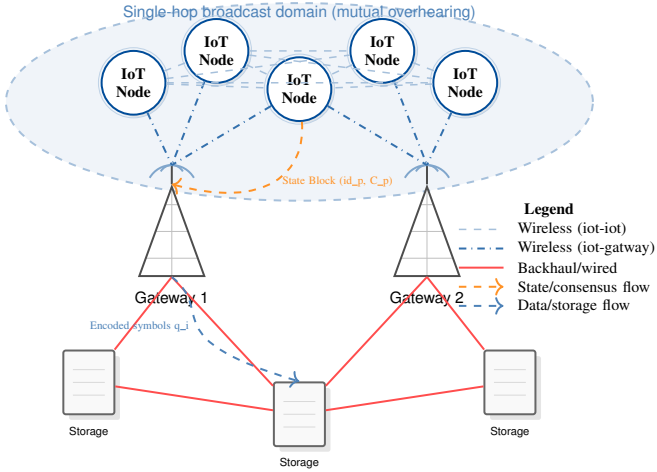

\subsection{Integrity Verification via Hash-Based Commitments}
\label{subsec:integrity_hashing}

Because Raptor codes are not cryptographic, a malicious storage node could return corrupted symbols to disrupt decoding. We therefore add a hash-based integrity mechanism anchored in the State Chain. As explained, we consider two disjoint sets of nodes: consensus nodes and storage nodes. Let $n$ be the number of consensus nodes, among which up to $f$ may be Byzantine; let $s$ be the number of storage nodes provisioned for payload availability, among which up to $f_s$ may be Byzantine or unavailable. For each payload block $p$, the encoder produces $M$ coded symbols (typically $M \ge s$) to be distributed across storage nodes.

For a payload block $p$ (with unique identifier $id_p$, e.g., epoch/height and payload hash), the encoder produces encoded symbols
$Q_p = \{q_1,\dots,q_M\}$.
For each symbol, compute
$h_i = \mathsf{Hash}(id_p \,\|\, i \,\|\, q_i)$,
and define the verification set
$H_p = \{h_1,\dots,h_M\}$.

Embedding $H_p$ directly in the State Block would make the State Chain grow with $M$. Instead, we build a Merkle tree over $\{h_i\}$ and store only its root: $C_p = \mathsf{MerkleRoot}(H_p)$. The consensus layer records $C_p$ in the State Block (Fig.~\ref{fig:block_structure}). During reconstruction, a storage node returns $(q_i, \pi_i)$, where $\pi_i$ is a Merkle inclusion proof for $h_i$. The requester recomputes $h_i$, verifies $\pi_i$ against $C_p$, and accepts $q_i$ only if the proof is valid. This keeps the consensus data compact while providing integrity under standard hash assumptions~\cite{merkle1988digital}. Fig.~\ref{fig:hash_fixed} illustrates the end-to-end commitment generation, from payload encoding to the Merkle root $C_p$ stored in the State Chain.

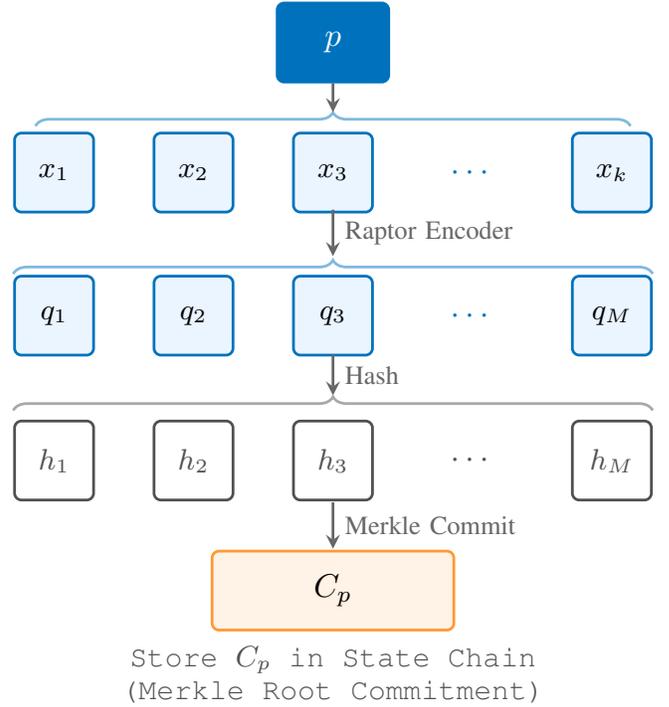
\begin{figure}[t]
	\centering
	\definecolor{tech_blue}{RGB}{0, 114, 189}
	\definecolor{tech_gray}{RGB}{80, 80, 80}
	\definecolor{light_blue}{RGB}{235, 245, 255}
	\usetikzlibrary{calc, decorations.pathreplacing}
	\resizebox{1\columnwidth}{!}{
	\begin{tikzpicture}[
		scale=0.95, transform shape,
		font=\sffamily\small,
		box/.style={
			rectangle,
			minimum size=0.85cm,
			rounded corners=2pt,
			thick,
			align=center
		},
		style_blue/.style={
			box,
			draw=tech_blue,
			fill=light_blue,
			text=black
		},
		style_source/.style={
			box,
			draw=tech_blue,
			fill=tech_blue,
			text=white,
			font=\bfseries
		},
		style_gray/.style={
			box,
			draw=tech_gray,
			fill=white,
			text=tech_gray
		},
		style_commit/.style={
			box,
			draw=ieee_orange!80,
			fill=ieee_orange!10,
			text=black,
			font=\bfseries
		},
		myarrow/.style={
			->,
			thick,
			color=black!60,
			>=stealth
		},
		mybrace/.style={
			decorate,
			decoration={brace, amplitude=4pt},
			thick
		}
		]
		
		\node[style_source, minimum width=1.2cm] (p) at (0, 0) {$p$};
		
		\draw[mybrace, tech_blue!50] (-3.2, -0.9) -- (3.2, -0.9);
		
		\draw[myarrow] (p.south) -- (0, -0.75);
		
		\node[style_blue] (x1) at (-3.0, -1.4) {$x_1$};
		\node[style_blue] (x2) at (-1.5, -1.4) {$x_2$};
		\node[style_blue] (x3) at (0, -1.4)    {$x_3$};
		\node[font=\bfseries, text=tech_blue] at (1.5, -1.4) {$\cdots$};
		\node[style_blue] (xk) at (3.0, -1.4)  {$x_k$};
		
		\draw[myarrow] (0, -1.8) -- (0, -2.3) node[midway, right, font=\footnotesize] {Raptor Encoder};
		
		\node[style_blue] (q1) at (-3.0, -2.94) {$q_1$};
		\node[style_blue] (q2) at (-1.5, -2.94) {$q_2$};
		\node[style_blue] (q3) at (0, -2.94)    {$q_3$};
		\node[font=\bfseries, text=tech_blue] at (1.5, -2.94) {$\cdots$};
		\node[style_blue] (qM) at (3.0, -2.94)  {$q_M$};
		
		\draw[mybrace, tech_blue!50]
		($(q1.north west)+(0, 0.01)$) -- ($(qM.north east)+(0, 0.01)$);
		
		\draw[myarrow] (0, -3.36) -- (0, -3.8) node[midway, right, font=\footnotesize] {Hash};
		
		\node[style_gray] (h1) at (-3.0, -4.5) {$h_1$};
		\node[style_gray] (h2) at (-1.5, -4.5) {$h_2$};
		\node[style_gray] (h3) at (0, -4.5)    {$h_3$};
		\node[font=\bfseries, text=tech_gray] at (1.5, -4.5) {$\cdots$};
		\node[style_gray] (hM) at (3.0, -4.5)  {$h_M$};
		
		\draw[mybrace, tech_gray!50]
		($(h1.north west)+(0, 0.1)$) -- ($(hM.north east)+(0, 0.1)$);
		
		\draw[myarrow] (0, -4.95) -- (0, -5.45) node[midway, right, font=\footnotesize] {Merkle Commit};
		
		\node[style_commit, minimum width=2.6cm, minimum height=0.85cm] (Cp) at (0, -5.9) {$C_p$};
		
		\node[font=\small\ttfamily, color=tech_gray, align=center] at (0, -6.8)
		{Store $C_p$ in State Chain\\(Merkle Root Commitment)};
		
	\end{tikzpicture}
    }

	\caption{Generation of the payload integrity commitment. Encoded symbols are hashed and then committed via a Merkle root $C_p$, which is stored in the State Chain to anchor integrity while keeping consensus blocks lightweight.}
	\label{fig:hash_fixed}

\end{figure}

\subsection{Operational Workflow}
\label{subsec:workflow}

The lifecycle of a block involves: (i) proposing and notarizing a lightweight State Block, (ii) distributing and storing erasure-coded payload symbols, and (iii) on-demand decoding with integrity checks. The workflow is designed to align with Streamlet finality: only \emph{finalized} commitments should trigger irreversible storage actions such as pruning.

\subsubsection{Encoding and Distribution (per epoch)}
This stage erasure-codes the payload, distributes coded symbols to storage nodes, and records the integrity commitment $(id_p, C_p)$ in the State Chain:
\begin{enumerate}
	\item[i)] \textbf{Block formation:} The leader forms a payload block $p$ that contains transactions and constructs the corresponding State Block that contains $(id_p, C_p)$ and other consensus metadata.
	\item[ii)]  \textbf{Erasure coding:} Split $p$ into $k$ source symbols and generate encoded symbols using Raptor coding \cite{shokrollahi2006raptor}. Choose an overhead parameter $\epsilon>0$ so that successful decoding occurs with high probability from any set of at least $k(1+\epsilon)$ valid symbols.
	\item[iii)]  \textbf{Commitment anchoring:} Compute $C_p$ as the Merkle root over symbol hashes (Section~\ref{subsec:integrity_hashing}) and include $(id_p, C_p)$ in the State Block broadcast for consensus.
	\item[iv)]  \textbf{Storage dispatch:} Distribute encoded symbols to storage nodes via gateways and backhaul. Storage nodes index symbols by $(id_p, i)$.
\end{enumerate}

\textbf{Storage availability condition.}
Recall that among $s$ storage nodes, up to $f_s$ may be Byzantine or unavailable. A sufficient condition for successful decoding is
\begin{equation}
	k(1+\epsilon) \le s - f_s.
	\label{eq:storage_sufficient_condition}
\end{equation}
This separates consensus fault tolerance $(n,f)$ from storage availability $(s,f_s)$; we use this distinction in the reliability and liveness analysis that follows.

\subsubsection{Storage Optimization (Pruning)}
This stage reclaims storage from orphaned payloads while preserving safety under State Chain finality:
\begin{enumerate}
	\item[i)] \textbf{Detection (consensus-driven):} After blocks become \emph{finalized} on the State Chain, any State Blocks that are neither finalized nor ancestors of a finalized block are considered invalid.
	\item[ii)] \textbf{Safe purging rule:} Storage nodes delete symbols only for payloads whose commitments correspond to invalidated State Blocks. Finality is the safety trigger; this prevents deleting data that could later become part of the canonical chain \cite{chan2020streamlet}.
\end{enumerate}

\subsubsection{Secure Decoding (On-demand Retrieval)}
This stage retrieves coded symbols on demand, verifies each symbol against the committed root $C_p$, and decodes the payload once enough valid symbols are collected:
\begin{enumerate}
	\item[i)] It requests symbols for $id_p$ from storage nodes.
	\item[ii)] For each received symbol $(q_i,\pi_i)$, it verifies integrity against the committed root $C_p$ from the finalized State Chain, discarding invalid symbols.
	\item[iii)] It runs the Raptor decoder once at least $k(1+\epsilon)$ valid symbols are collected \cite{shokrollahi2006raptor}.
\end{enumerate}

\section{Theoretical Analysis of Consensus Convergence}
\label{sec:analysis}

This section analyzes the convergence (liveness) of Wireless Streamlet under the packet-erasure wireless model introduced in Section~\ref{sec:system_model} and the protocol design in Section~\ref{sec:protocol_design}. In particular, we derive a conservative lower bound on the per-epoch notarization probability and then obtain the expected finality latency using a Markov-chain argument consistent with Streamlet's three-consecutive-notarized rule.

\subsection{Model Assumptions and Notation}

We work in the weak-synchrony, epoch-based model of Section~\ref{sec:system_model}. The analysis focuses on the \emph{consensus plane} (State Chain). Bulk payload dissemination is decoupled and handled by the storage layer (Section~\ref{sec:encoding_scheme}), hence it does not affect the \emph{notarization} probability except through the availability of compact commitments $(id_p,C_p)$.

\noindent\textbf{A1 (System size and Byzantine faults).}
We adopt the node/adversary model in Section~\ref{sec:system_model}: there are $n$ consensus nodes with at most $f$ Byzantine, satisfying $n \ge 3f+1$. Let $h=n-f$ denote the number of honest nodes.

\noindent\textbf{A2 (TDMA epochs).}
We adopt the TDMA epoch structure in Section~\ref{subsec:tdma_voting}: each epoch contains $(n+1)$ scheduled slots (one proposal slot plus $n$ voting slots), and the epoch duration is given by \eqref{eq:epoch_duration}.

\noindent\textbf{A3 (Packet erasure channel lower bound).}
We adopt the non-degenerate honest connectivity assumption in Section~\ref{sec:system_model}: there exists $p_H>0$ such that any honest-to-honest transmission attempt succeeds with probability at least $p_H$.

\noindent\textbf{A4 (Bounded retransmissions per slot).}
Each scheduled slot permits up to $K_{\mathrm{tx}}$ transmission attempts of the same fixed-size consensus packet. Define the within-slot delivery lower bound
\begin{equation}
	\hat{p} \triangleq 1-(1-p_H)^{K_{\mathrm{tx}}}.
	\label{eq:within_slot_delivery}
\end{equation}

\noindent\textbf{A5 (Honest-leader probability under CALE).}
Let $\pi_e$ denote the probability that the elected leader in epoch $e$ is honest.
Under CALE (Section~\ref{subsec:cale}), all nodes compute a deterministic weight table
$\{w_i(e)\}_{i=1}^n$ from the latest \emph{finalized} State Chain checkpoint, where $w_i(e)$ is derived from verifiable quorum certificates that include receiver-measured, vote-signed CSI tags.The unique leader $L(e)$ is then selected by the deterministic weighted winner rule in \eqref{eq:cale_rule}. Let $\mathcal{H}$ be the set of honest nodes with $|\mathcal{H}|=h$. Assuming the normalized hash $\mathsf{Hash}_{nor}$ behaves as a uniform pseudo-random source across $\mathsf{pk}_i$, the weighted
winner rule induces the selection probability
\begin{equation}
\begin{aligned}
    &\Pr[L(e)=i] = \frac{w_i(e)^{\alpha}}{\sum_{k\in\mathcal{N}} w_k(e)^{\alpha}},\\
	&\pi_e \triangleq \Pr[L(e)\ \text{is honest}]
	= \frac{\sum_{i\in\mathcal{H}} w_i(e)^{\alpha}}{\sum_{k\in\mathcal{N}} w_k(e)^{\alpha}}.
\end{aligned}
	\label{eq:pi_cale}
\end{equation}
For uniform election (all $w_i(e)\equiv 1$), \eqref{eq:pi_cale} reduces to $\pi_e = h/n$. In the liveness analysis, we use a conservative epoch-independent lower bound
$\pi \triangleq \inf_e \pi_e > 0$. Because $w_i(e)$ is computed from finalized quorum certificates (rather than self-reported scores), Byzantine nodes cannot arbitrarily inflate their election weights beyond what is supported by verifiable CSI-tagged votes.

\subsection{Safety Discussion}

Wireless Streamlet preserves Streamlet's safety argument at the \emph{protocol layer}: votes are signed, notarization requires a $2f+1$ quorum, and the ``longest notarized chain'' rule prevents two conflicting finalized blocks under the standard Streamlet assumptions \cite{chan2020streamlet}.

\begin{remark}[Threat model clarification]
We leverage wireless broadcast (Section~\ref{subsec:broadcast_no_echo}) purely for efficiency---to remove the application-layer echo phase. We do not assume the physical layer provides a cryptographic non-equivocation guarantee: a Byzantine leader can still attempt equivocation by transmitting different payloads over time or by exploiting capabilities outside our model (e.g., beamforming). Accordingly, we adopt the same threat model as in Section~\ref{subsec:broadcast_no_echo}---an omni-directional IoT cluster without beamforming-capable adversaries. Under this regime, omitting echo is an engineering choice; its effect is captured by the liveness bounds below rather than by the safety argument.
\end{remark}

\subsection{Liveness: Per-Epoch Notarization Probability}

We define an epoch to be successful if the designated leader proposes a valid block and \emph{some honest node} obtains at least $2f+1$ valid votes for that proposal by the end of the epoch (i.e., the block becomes notarized within the epoch).

Because we removed explicit echoing, notarization must be achieved by direct overhearing of the scheduled vote broadcasts. We derive a conservative lower bound by analyzing a particular honest observer: the (honest) leader itself. If the honest leader can collect $2f+1$ votes, then a notarization certificate can be formed for the epoch.

\subsubsection{Step 1 (Proposal reception among honest nodes)}

We condition on the leader being honest, which happens with probability $\pi$. During Slot 0, the leader broadcasts the proposal, in the form of a compact State Block header containing parent pointers and at least the pair $(id_p, C_p)$. Let $X$ be the number of honest nodes that successfully receive the proposal in Slot~0. Under the independent-loss lower bound (Section~\ref{sec:system_model}), we have
\begin{equation}
	X \succeq \mathrm{Binomial}(h,\hat{p}),
	\label{eq:proposal_reception_binom}
\end{equation}
where $\succeq$ denotes first-order stochastic dominance, i.e., for all thresholds $t$, $\Pr[X \ge t] \ge \Pr\!\big[\mathrm{Binomial}(h,\hat{p}) \ge t\big]$. Hence a conservative lower bound on the probability that at least $2f+1$ honest nodes are \emph{eligible to vote} is
\begin{equation}
	P_{\mathrm{prop}} \triangleq \Pr[X \ge 2f+1]
	\ge \sum_{x=2f+1}^{h} \binom{h}{x}\hat{p}^{x}(1-\hat{p})^{h-x}.
	\label{eq:P_prop}
\end{equation}

\subsubsection{Step 2 (Vote reception at the leader)}

Only honest nodes that received the proposal will broadcast a vote in their dedicated TDMA slots. We now condition on $X=x$. Each of these $x$ honest voters broadcasts once, allowing up to $K_{tx}$ attempts within its slot. Let $Z$ be the number of those votes that are successfully received by the (honest) leader by the end of the epoch. Using the honest-link lower bound again, we have
\begin{equation}
	Z \mid (X=x) \succeq \mathrm{Binomial}(x,\hat{p}).
	\label{eq:vote_reception_binom}
\end{equation}
Therefore, conditional on $X=x$, a conservative bound on leader-side quorum formation is
\begin{equation}
	\Pr[Z \ge 2f+1 \mid X=x]
	\ge \sum_{z=2f+1}^{x} \binom{x}{z}\hat{p}^{z}(1-\hat{p})^{x-z}.
	\label{eq:P_vote_given_x}
\end{equation}

\subsubsection{Step 3 (Per-epoch success probability)}

Combining \eqref{eq:proposal_reception_binom}--\eqref{eq:P_vote_given_x}, the per-epoch notarization probability admits the following conservative lower bound:
\begin{align}
	q & \triangleq \Pr[\text{epoch successful}]
	\ge \pi \sum_{x=2f+1}^{h} \Pr[X=x]\;\psi(x) \nonumber\\
	&\ge \pi \sum_{x=2f+1}^{h}
	\binom{h}{x}\hat{p}^{x}(1-\hat{p})^{h-x}\;\psi(x),
	\label{eq:q_lower_bound}
\end{align}
where $\psi(x) \triangleq \Pr\!\left[ \mathrm{Binomial}(x,\hat{p}) \ge 2f+1 \right] = \sum_{z=2f+1}^{x}\binom{x}{z}\hat{p}^{z}(1-\hat{p})^{x-z} $. This lower bound makes the ``two-hop'' nature of the epoch explicit: (i) enough honest nodes must hear the proposal to generate votes, and (ii) enough of those votes must be heard by the leader to reach $2f+1$.


\begin{remark}[Heterogeneity vs. conservative bounds]
	The analysis uses a uniform honest-link lower bound $p_H$ to obtain a conservative guarantee. Channel heterogeneity (e.g., some nodes are in deep fading) is addressed in the protocol via CALE: by biasing the unique-leader selection toward nodes with larger \emph{verifiable} connectivity weights (derived from CSI tags carried in signed votes and aggregated from quorum certificates), CALE improves the \emph{empirical} probability that proposals and votes are heard by a quorum compared to uniform random election under the same heterogeneous channel conditions. In our liveness bound, this effect is abstracted into the per-epoch success probability $q$ (and we only require a conservative $\pi>0$ for the probability that the epoch leader is honest).
\end{remark}

\subsection{Convergence to Finality: Expected Epochs and Expected Time}

Streamlet finality requires three consecutive notarized blocks in three consecutive epochs~\cite{chan2020streamlet}. To quantify \emph{finality progress}---i.e., how close the protocol is to forming such a length-3 notarized run---we model the evolution across epochs as a Markov chain. Specifically, we define states $S_k$ for $k\in\{0,1,2,3\}$, where $k$ is the length of the current suffix of consecutive \emph{successful} epochs in which a notarized block is produced. State $S_3$ is absorbing, corresponding to finality being reached.

Let $E_k$ be the expected number of epochs to reach $S_3$ starting from $S_k$. Let $q$ denote the per-epoch success probability, i.e., the probability that an epoch is successful (a notarized block is produced in that epoch). The transitions then satisfy:
\begin{align}
	E_3 &= 0, \label{eq:Ek3}\\
	E_2 &= 1 + (1-q)E_0 + qE_3, \label{eq:Ek2}\\
	E_1 &= 1 + (1-q)E_0 + qE_2, \label{eq:Ek1}\\
	E_0 &= 1 + (1-q)E_0 + qE_1. \label{eq:Ek0}
\end{align}
For $q=1$, the chain deterministically reaches $S_3$ in $3$ epochs, i.e., $E_0=3$. Solving \eqref{eq:Ek3}-\eqref{eq:Ek0} yields the closed-form expectation
\begin{equation}
	E[T_{\mathrm{epochs}}] = E_0 = \frac{1-q^3}{q^3(1-q)}.
	\label{eq:expected_epochs}
\end{equation}
Because each epoch lasts $T_{\mathrm{epoch}}$ seconds (Eq.~\eqref{eq:epoch_duration}), the expected \emph{time-to-finality} is
\begin{equation}
	E[T_{\mathrm{final}}] = T_{\mathrm{epoch}} \cdot E[T_{\mathrm{epochs}}].
	\label{eq:expected_finality_time}
\end{equation}

\begin{theorem}[Liveness under non-degenerate honest connectivity]\label{thm:liveness_positive_q}
	Assume $p_H>0$ and $\pi>0$. Then $\hat{p}>0$ by \eqref{eq:within_slot_delivery}, which implies $q>0$ by \eqref{eq:q_lower_bound}. Consequently, the expected number of epochs to finality in \eqref{eq:expected_epochs} is finite. Moreover, if the per-epoch success events are independent across epochs or, more generally, form a stationary ergodic process with sufficient mixing, then blocks finalize with probability~$1$ as time goes to infinity under the weak synchrony abstraction.
\end{theorem}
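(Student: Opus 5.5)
The argument has three links. First, positivity of $\hat p$ and of the lower bound on $q$. Second, finiteness of $E_0$ via the Markov chain. Third, almost-sure occurrence of three consecutive successful epochs.

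\textbf{Positivity.} Since $p_H\in(0,1]$ and $K_{\mathrm{tx}}\ge 1$, we have $(1-p_H)^{K_{\mathrm{tx}}}<1$, so $\hat p>0$ by \eqref{eq:within_slot_delivery}. Next look at the single term $x=h$ in \eqref{eq:q_lower_bound}. It equals $\pi\,\hat p^{h}\,\psi(h)$, and $\psi(h)\ge \hat p^{h}$, which is the $z=h$ term of the binomial tail. This needs the index range to be nonempty, i.e.\ $h\ge 2f+1$, which follows from $n\ge 3f+1$ in A1. All other terms are nonnegative, so
$$q\;\ge\;\pi\,\hat p^{2h}\;>\;0.$$

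\textbf{Finiteness of $E_0$.} Solve the linear system \eqref{eq:Ek3}--\eqref{eq:Ek0} by back substitution. Equation \eqref{eq:Ek0} gives $E_1=E_0-1/q$. Equation \eqref{eq:Ek1} then gives $E_2=E_1-1/q^2$, and \eqref{eq:Ek2} gives $E_2=1+(1-q)E_0$. Equating the two expressions for $E_2$ yields
$$q^3E_0=1+q+q^2,$$
which is \eqref{eq:expected_epochs}. This is finite for every $q\in(0,1]$, with the case $q=1$ giving $E_0=3$.

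The recursion presumes a homogeneous success probability. Per-epoch probabilities are only bounded below by the same $\underline q:=\pi\hat p^{2h}$, so I would justify replacing $q$ by $\underline q$ with a coupling. Specifically, the run-length process under the true dynamics stochastically dominates the chain driven by i.i.d.\ Bernoulli$(\underline q)$ successes. This holds provided each epoch succeeds with conditional probability at least $\underline q$ given the past, which is exactly what A3 and A5 guarantee ("for any epoch and any history"). Hence $E[T_{\mathrm{epochs}}]\le (1-\underline q^3)/(\underline q^3(1-\underline q))<\infty$.

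\textbf{Finality with probability one.} Partition time into disjoint blocks of three epochs. Under the history-uniform conditional bound, each block is entirely successful with conditional probability at least $\underline q^3$, whatever happened in earlier blocks. Therefore
$$\Pr[\text{no finality in the first } m \text{ blocks}]\le(1-\underline q^3)^m\to 0.$$
This also gives a geometric tail, re-proving finiteness of the expectation. Combined with Streamlet's finalization rule, three consecutive notarized blocks finalize the third one. For the stationary ergodic variant, the ergodic theorem applied to the indicator of "epochs $t,t+1,t+2$ all successful" gives a limiting frequency equal to its stationary probability. If that probability is positive, the event occurs infinitely often almost surely. Mixing is what I would invoke to argue the probability is positive when only marginal success probabilities are controlled.

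The main obstacle is this dependence step. The stated bound \eqref{eq:q_lower_bound} is a marginal bound, while the Markov chain implicitly needs it conditionally on the past. I would rely on the history-uniform phrasing of A3 and the infimum in A5 to get the conditional version. In the merely ergodic case I would state the positivity of the joint three-epoch success probability as the content of the "sufficient mixing" hypothesis.

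One more point needs care. A notarized block in each of three consecutive epochs must actually form a chain, meaning each successive honest leader extends the previous block. I would defer this to Streamlet's argument under weak synchrony, since the success event is defined so that the honest leader holds the notarization.
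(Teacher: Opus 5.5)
Your proposal is correct and takes the same three-step route as the paper's proof sketch: positivity of $\hat p$ and hence of the lower bound on $q$, then finiteness of the Markov-chain hitting time $E_0=(1+q+q^2)/q^3$, then almost-sure occurrence of a length-3 run under independence or mixing. You also fill in what the sketch leaves implicit (the explicit bound $q\ge\pi\hat p^{2h}$ from the $x=h$ term, the coupling with an i.i.d.\ Bernoulli$(\underline q)$ chain via the history-uniform reading of A3 and A5, and the disjoint-block estimate $(1-\underline q^{3})^{m}\to 0$), and you are right to treat positivity of the joint three-epoch success probability as part of the ``sufficient mixing'' hypothesis, since mixing plus a positive marginal does not by itself exclude an aperiodic irreducible chain that forbids the pattern $111$.
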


\noindent\textbf{Proof sketch.}
The conditions $p_H>0$ and $\pi>0$ yield a strictly positive per-epoch success probability $q>0$. Finality requires three consecutive successful epochs, i.e., the occurrence of a length-3 run. The Markov-chain calculation in \eqref{eq:expected_epochs} gives a finite expectation for the hitting time when $q>0$. For almost-sure finality, independence (or sufficient mixing) ensures that length-3 runs occur infinitely often with probability~$1$, implying that the absorbing state $S_3$ is reached almost surely.

\subsection{Design Parameters and Optimization}

Under our model, the liveness bound suggests two dominant levers: (i) the probability that the elected leader is both honest and well-connected, which is captured by CALE through $\pi$, and (ii) per-slot delivery reliability controlled by bounded retransmissions.

Increasing $K_{\mathrm{tx}}$ raises $\hat{p}$ exponentially fast via \eqref{eq:within_slot_delivery}, which in turn improves $q$. The cost is additional airtime within each TDMA slot. We therefore consider minimizing the expected number of on-air transmission attempts per finalized block.

Let $C_{\mathrm{epoch}}$ denote the expected number of consensus-plane transmission attempts per epoch. Under TDMA there are $(n+1)$ scheduled senders; allowing up to $K_{\mathrm{tx}}$ attempts each yields the conservative proxy
\begin{equation}
	C_{\mathrm{epoch}} \propto (n+1)K_{\mathrm{tx}},
	\label{eq:epoch_cost_proxy}
\end{equation}
which upper-bounds the number of on-air transmission attempts in the consensus plane per epoch. We can then choose $K_{\mathrm{tx}}$ and CALE parameters that affect $\pi$ to minimize the expected cost per finalized block:
\begin{equation}
	\min_{K_{\mathrm{tx}},\,\text{(CALE params)}}\;
	C_{\mathrm{epoch}}\cdot E[T_{\mathrm{epochs}}]\big(q(\hat{p},\pi)\big),
	\label{eq:consensus_cost_optimization}
\end{equation}
where $\hat{p}$ depends on $K_{\mathrm{tx}}$ through \eqref{eq:within_slot_delivery} and $q$ is lower bounded by \eqref{eq:q_lower_bound}.

Storage-plane parameters (e.g., the Raptor overhead $\epsilon$ and the availability condition in \eqref{eq:storage_sufficient_condition}) govern payload retrievability and storage cost, but do not directly enter the notarization probability of the State Chain. This separation is by design in the dual-chain architecture (Section~\ref{subsec:dual_chain_commitment}).

\section{Experimental Evaluation}
\label{sec:evaluation}

This section evaluates the proposed Wireless Streamlet protocol in a wireless IoT setting. 
Our evaluation follows the architecture and assumptions in Sections~\ref{sec:system_model}--\ref{sec:protocol_design}: 
the \emph{consensus plane} (State Chain) disseminates compact State Blocks (headers/commitments), while bulk payloads are handled by the \emph{storage plane} (Data Chain) with erasure coding and verifiable commitments.
Accordingly, we separate results into three parts:
(i) validating the two core design components namely CALE for consensus liveness and the Raptor-coded data plane for payload availability,
(ii) benchmarking end-to-end consensus performance against PBFT and HotStuff under \emph{matched} networking assumptions, and
(iii) quantifying resource efficiency and scalability in terms of storage and bootstrapping.

\subsection{Experimental Setup}
\label{subsec:exp_setup}

We implement Wireless Streamlet on the \textit{Bamboo} benchmarking framework \cite{dong2024bamboo} with a customizable consensus module and a wireless link layer. The default parameters adopted in the experiments are given in Table \ref{tab:exp_params}, unless otherwise stated. The 20\,KB State header includes aggregated metadata/commitments and is intentionally conservative; votes are 1\,KB. Since consensus operates only on headers---while payloads remain off-chain and are referenced via commitments---the header budget must accommodate the necessary aggregated metadata. Results are averaged over 20 independent runs with 95\% confidence intervals. 

\subsubsection{Wireless Network and MAC}
We simulate a wireless ad-hoc IoT cluster with probabilistic packet delivery as in Section~\ref{sec:system_model}. For liveness-oriented experiments, we parameterize the network by an honest-link success lower bound $p_H$ and a bounded retransmission budget $K_{\mathrm{tx}}$ per TDMA slot, yielding the within-slot delivery bound $\hat{p} = 1-(1-p_H)^{K_{\mathrm{tx}}}$. We adopt the epoch-based TDMA voting schedule in Section~\ref{subsec:tdma_voting}, with $(n+1)$ scheduled slots per epoch. The epoch duration follows Eq.~\eqref{eq:epoch_duration}.


\subsubsection{Consensus-Plane Message Sizes (State Chain)}
To match the dual-chain design (Section~\ref{subsec:dual_chain_commitment}), the consensus plane disseminates only compact objects:
\begin{itemize}
	\item \textbf{State Block header:} includes parent pointer(s), epoch/view, and data commitment tuple $(id_p, C_p)$; default size is 20~KB.
	\item \textbf{Vote:} includes a signature, a block reference, and a small receiver-measured CSI tag piggybacked for CALE (cf. \eqref{eq:vote_with_csi}); default size is 1\,KB.
\end{itemize}
Bulk payloads (default 1.2~MB per block) are never broadcast as part of consensus.

\subsubsection{Storage-Plane Model (Data Chain)}
Payload dissemination and retrieval follow Section~\ref{sec:encoding_scheme}. 
We evaluate our proposed Raptor coding by comparing it against the uncoded replication baseline and an optional fixed-rate Reed–Solomon baseline. In the replication baseline, each payload is partitioned into $k$ uncoded fragments and retrieval succeeds only if \emph{all} $k$ original fragments are obtained; each fragment is retransmitted at most $r$ times (or until a global download deadline $T_{\max}$), which explains the sharp success/latency degradation at high PER.
Nodes verify data integrity using Merkle proofs against the commitment $C_p$ anchored in the State Chain.

\subsubsection{Baselines and Fairness Controls}
We compare Wireless Streamlet against PBFT and HotStuff under \emph{matched} conditions:
\begin{itemize}
	\item \textbf{MAC fairness:} Unless explicitly stated, all protocols run over the same TDMA schedule (same $T_{\mathrm{slot}}$, $T_{\mathrm{guard}}$, $K_{\mathrm{tx}}$).
	\item \textbf{Message-size fairness:} All protocols run in a \emph{header-only} consensus mode, i.e., they agree on a header/commitment rather than transmitting the 1.2~MB payload inside consensus.
\end{itemize}

\begin{table}[t]
	\centering
	\caption{Default Parameters in Experimental Evaluation}
    \vspace{-0.3cm}
	\label{tab:exp_params}
	\begin{tabular}{lc}
		\toprule
		\textbf{Parameter} & \textbf{Default Value} \\
		\midrule
		Number of consensus nodes $n$ & 10 \\
		Byzantine bound $f$ & $\lfloor (n-1)/3 \rfloor$ \\
		Slot duration $T_{\mathrm{slot}}$ & 10 ms \\
		Guard time $T_{\mathrm{guard}}$ & 5 ms \\
		Retransmissions per slot $K_{\mathrm{tx}}$ & 2 \\
		Honest-link success lower bound $p_H$ & 0.95 \\
		State header size & 20 KB \\
		Vote size & 1 KB \\
		Payload size (Data Chain) & 1.2 MB \\
		Wireless bandwidth (peak) & 10 Mbps \\
		\bottomrule
	\end{tabular}
\end{table}

\subsection{Validation of Core Mechanisms}
\label{subsec:exp_core}

We first isolate and validate the effectiveness of (i) CALE on consensus liveness and (ii) the coded storage plane on payload availability and retrieval efficiency.

\subsubsection{CALE Robustness under Channel Heterogeneity}
We model a heterogeneous cluster with two link-quality classes: a fraction $\beta$ of nodes experience ``deep fading'' (i.e., their outgoing link success probability drops to $p_{\mathrm{fade}}=0.4$), while the remaining $1-\beta$ nodes retain $p_{\mathrm{good}}=0.8$. We vary $\beta$ from $0$ to $0.5$ and measure the per-epoch notarization rate, which is the probability that a notarized State Block is produced in an epoch. We simulate a single-hop TDMA epoch with one proposal slot and $n$ vote slots.
A directed link from sender $i$ to receiver $j$ succeeds per attempt with probability $p_{i\to j}=1-\mathrm{PER}_{i\to j}$, and with bounded retransmissions $K_{\mathrm{tx}}$ the within-slot delivery probability is
$\hat{p}_{i\to j}=1-(1-p_{i\to j})^{K_{\mathrm{tx}}}$. In each epoch, the leader broadcasts the proposal once (within-slot attempts bounded by $K_{\mathrm{tx}}=2$); every node that receives the proposal sends one vote in its TDMA slot; the epoch is marked notarized if the leader receives at least $2f+1$ valid votes for that proposal by the end of the epoch. Unless otherwise stated, we use the default parameters in Table~\ref{tab:exp_params}.

Fig.~\ref{fig:cale_notarization} shows that under uniform random leader election, the notarization rate decreases as $\beta$ increases, because the leader falls into the deep-fading subset with probability $\beta$, and such leaders are less likely to disseminate proposals and collect sufficient votes for notarization within the same epoch. In contrast, CALE maintains a high notarization rate by deterministically selecting a unique weighted leader using the verifiable weight table derived from the finalized State Chain (Section~\ref{subsec:cale_unique_leader}). Concretely, each vote piggybacks a receiver-measured CSI tag for the leader's proposal (Eq.~\eqref{eq:vote_with_csi}); from the notarization certificate, nodes can recompute the leader's connectivity score via a Byzantine-robust QC-median aggregator (Eq.~\eqref{eq:connectivity_score}) and update the corresponding leader weight for subsequent epochs. The oracle curve corresponds to an idealized policy that always selects the best-connected leader under the same channel model, and thus serves as an empirical upper bound; CALE tracks this bound closely across heterogeneity levels.

\subsubsection{Storage-Plane Resilience via Erasure Coding (Data Availability)}
We next evaluate the impact of erasure coding on the storage plane, focusing on payload \emph{availability} and \emph{retrieval efficiency}. We fix $n=10$ consensus nodes and vary PER from 0\% to 40\%. 
 A payload of size $B_p$ bytes is split into $k$ source symbols of size $B_{\mathrm{sym}}$ bytes, where $k=\lceil B_p/B_{\mathrm{sym}}\rceil$.
Raptor coding generates $M$ encoded symbols; decoding succeeds once $k_{\mathrm{req}}=\lceil k(1+\epsilon)\rceil$ \emph{verified} symbols are collected (overhead $\epsilon>0$). For a fair comparison, the replication baseline corresponds to \emph{uncoded} storage ($\epsilon=0$): the requester must collect and verify \emph{all} $k$ original symbols within the same retry budget $r$, request parallelism $c$, and wall-clock timeout $T_{\max}$. Each symbol is returned together with its Merkle proof; the transmitted object size is $B_{\mathrm{obj}} = B_{\mathrm{sym}} + B_{\mathrm{proof}}$, with $B_{\mathrm{proof}}\approx 32\cdot\lceil\log_2 M\rceil$ bytes for a 256-bit hash. A symbol transmission attempt over the storage link succeeds with probability $1-\mathrm{PER}$; each symbol is retried at most $r$ times and the overall retrieval terminates unsuccessfully if the wall-clock time exceeds $T_{\max}$. A payload is considered successfully retrieved if a node can reconstruct the block and verify it against the commitment $C_p$ within a fixed timeout. We allow at most $c$ concurrent outstanding symbol requests (request parallelism). In the replication baseline, the payload is treated as $k$ uncoded source symbols and retrieval succeeds only if \emph{all} $k$ original symbols are obtained (each verified against $C_p$), under the same per-symbol retry budget $r$ and global timeout $T_{\max}$. Unless otherwise specified, we set $B_{\mathrm{sym}}=200$~KB (thus $k=\lceil B_p/B_{\mathrm{sym}}\rceil=6$ for $B_p=1.2$~MB), $\epsilon=0.1$ (thus $k_{\mathrm{req}}=7$), $M=s=10$, $r=2$, $c=4$, and $T_{\max}=6$~s. With $M=10$, the Merkle proof size is $B_{\mathrm{proof}}\approx 32\cdot\lceil\log_2 M\rceil\approx 128$~B.

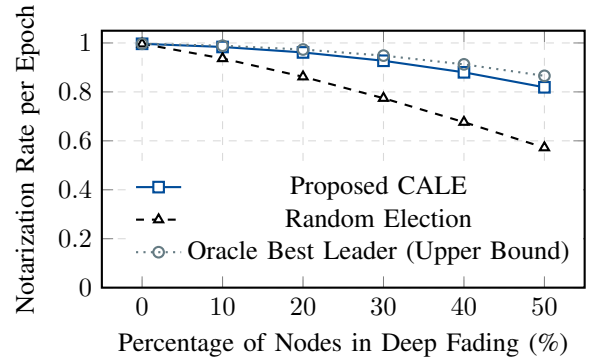
\begin{figure}[t]
	\centering
	\begin{tikzpicture}
		\begin{axis}[
			width=8cm, height=5cm,
			xlabel={Percentage of Nodes in Deep Fading (\%)},
			ylabel={Notarization Rate per Epoch},
			ymin=0, ymax=1.05,
			xtick={0, 10, 20, 30, 40, 50},
			ytick={0, 0.2, 0.4, 0.6, 0.8, 1.0},
			grid=major,
			grid style={dashed, gray!30},
			legend style={at={(0.05, 0.05)}, anchor=south west, draw=none, fill=white},
			thick
			]
			\addplot[color=ieee_blue, mark=square*, mark options={fill=white,solid}] 
coordinates {
  (0,  0.996550) (10, 0.983606) (20, 0.961249)
  (30, 0.927211) (40, 0.879988) (50, 0.818781)
};
			\addlegendentry{Proposed CALE}
			
			\addplot[color=black, mark=triangle*, mark options={fill=white,solid}, dashed] 
coordinates {
  (0,  0.996550) (10, 0.936517) (20, 0.861950)
  (30, 0.774245) (40, 0.676314) (50, 0.572169)
};
			\addlegendentry{Random Election}
			
			\addplot[color=tech_gray, mark=o, mark options={fill=white,solid}, dotted] 
coordinates {
  (0,  0.996550) (10, 0.988641) (20, 0.973195)
  (30, 0.948239) (40, 0.912596) (50, 0.865675)
};
			\addlegendentry{Oracle Best Leader (Upper Bound)}
		\end{axis}
	\end{tikzpicture}
	\vspace{-0.4cm}
	\caption{Impact of leader election on \emph{per-epoch notarization rate} under channel heterogeneity. ``Deep fading'' nodes have lower link success probability.}
		\vspace{-0.4cm}
	\label{fig:cale_notarization}
\end{figure}

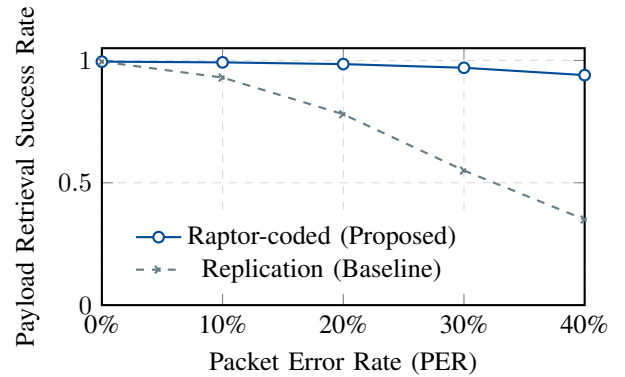
\begin{figure}[t]
	\centering
	\begin{tikzpicture}
		\begin{axis}[
			width=8cm, height=5cm,
			xlabel={Packet Error Rate (PER)},
			ylabel={Payload Retrieval Success Rate},
			xmin=0, xmax=0.4,
			xtick={0, 0.1, 0.2, 0.3, 0.4},
			xticklabels={0\%, 10\%, 20\%, 30\%, 40\%},
			ymin=0, ymax=1.05,
			grid=major,
			grid style={dashed, gray!30},
			legend style={at={(0.05, 0.05)}, anchor=south west, draw=none, fill=white},
			thick
			]
			\addplot[color=ieee_blue, mark=*, mark options={fill=white}] 
			coordinates {
				(0, 0.995) (0.1, 0.992) (0.2, 0.985) (0.3, 0.970) (0.4, 0.940)
			};
			\addlegendentry{Raptor-coded (Proposed)}
			
			\addplot[color=tech_gray, mark=x, dashed] 
			coordinates {
				(0, 0.995) (0.1, 0.930) (0.2, 0.780) (0.3, 0.550) (0.4, 0.350)
			};
			\addlegendentry{Replication (Baseline)}
		\end{axis}
	\end{tikzpicture}
	\vspace{-0.4cm}
	\caption{Storage-plane robustness: payload retrieval success rate vs. PER. Consensus remains header-only; coding improves \emph{data availability}, not notarization.}
		\vspace{-0.4cm}
	\label{fig:payload_availability}
\end{figure}

As shown in Fig.~\ref{fig:payload_availability}, Raptor coding achieves a consistently high retrieval success rate even with substantial packet loss, whereas replication drops rapidly as PER increases. We define the payload retrieval latency as the end-to-end wall-clock time from issuing the first request for block $id_p$ to successfully verifying each accepted symbol via its Merkle inclusion proof (against $C_p$) and completing payload reconstruction/decoding; this metric includes wireless transmission time and all retransmissions/backoff delays. Fig.~\ref{fig:payload_latency} shows a similar trend in latency: coding can tolerate missing fragments and completes decoding once a sufficient number of symbols are received, which reduces reliance on repeated full-block retransmissions. Overall, coding mainly improves \emph{data availability} rather than consensus liveness, which aligns with the separation between the two chains.

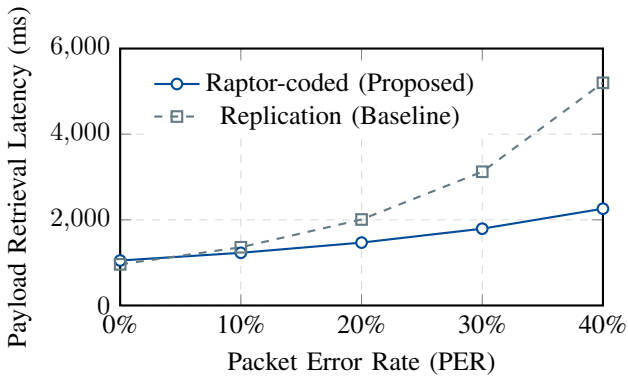
\begin{figure}[t]
	\centering
	\begin{tikzpicture}
		\begin{axis}[
			width=8cm, height=5cm,
			xlabel={Packet Error Rate (PER)},
			ylabel={Payload Retrieval Latency (ms)},
			xmin=0, xmax=0.4,
			xtick={0, 0.1, 0.2, 0.3, 0.4},
			xticklabels={0\%, 10\%, 20\%, 30\%, 40\%},
			ymin=0, ymax=6000,
			grid=major,
			grid style={dashed, gray!30},
			legend style={at={(0.05, 0.95)}, anchor=north west, draw=none, fill=white},
			thick
			]
			\addplot[color=ieee_blue, mark=*, mark options={fill=white}] 
coordinates {
	(0,1050) (0.1,1230) (0.2,1468) (0.3,1793) (0.4,2260)
};
			\addlegendentry{Raptor-coded (Proposed)}
			
			\addplot[color=tech_gray, mark=square, dashed, mark options={fill=white, solid}] 
coordinates {
	(0,960) (0.1,1360) (0.2,2008) (0.3,3123) (0.4,5200)
};
			\addlegendentry{Replication (Baseline)}
		\end{axis}
	\end{tikzpicture}
	\vspace{-0.4cm}
	\caption{Storage-plane efficiency: payload retrieval latency vs. PER. Raptor coding reduces retries by allowing decoding from a subset of received symbols.}
		\vspace{-0.4cm}
	\label{fig:payload_latency}
\end{figure}

\begin{table}[t]
	\centering
	\caption{Wired vs. Wireless (Header-only) for $n=4$ Nodes}
    \vspace{-0.3cm}
	\label{tab:env_comparison_new}
	\begin{tabular}{lcc}
		\toprule
		\textbf{Metric} & \textbf{Wired (Reliable)} & \textbf{Wireless (TDMA)} \\
		\midrule
		Finality latency (avg) & 180 ms & 620 ms \\
		Finality latency (p95) & 230 ms & 980 ms \\
		Notarization rate / epoch & 0.99 & 0.92 \\
		\bottomrule
	\end{tabular}
    \vspace{-0.4cm}
\end{table}

\subsection{End-to-End Consensus Performance}
\label{subsec:exp_consensus}

We now benchmark the consensus plane under wireless constraints and compare against PBFT and HotStuff.

\subsubsection{Wired vs. Wireless Impact (Header-only Consensus)}
We first quantify the performance gap between an ideal reliable network (wired) and the wireless TDMA model.

We report finality latency (three consecutive notarized blocks) and notarization rate per epoch. 
This isolates the overhead from wireless losses and retransmissions without conflating payload dissemination. 
Table~\ref{tab:env_comparison_new} summarizes the impact of the wireless medium on the \emph{consensus plane} under a header-only workload, and isolates the wired--wireless gap using a smaller cluster ($n=4$) to provide a clean, controlled comparison of MAC/medium effects. Compared to the wired reliable setting, the wireless TDMA setting increases the average finality latency from 180 ms to 620 ms, and the tail latency (p95) from 230 ms to 980 ms. Meanwhile, the per-epoch notarization rate drops from 0.99 to 0.92. These results are consistent with our liveness analysis: in a lossy wireless environment, even when each packet is small (header and votes), the protocol still experiences occasional vote losses and retransmissions, which reduce the probability of collecting $2f+1$ votes within a single epoch and consequently inflate the number of epochs required to reach finality.

Fig.~\ref{fig:throughput_env_new} further illustrates the throughput dynamics over time.
The wired baseline sustains around 1.5k tx/s, while the wireless TDMA setting stabilizes around 0.5--0.6k tx/s. The throughput gap primarily comes from reduced effective airtime utilization due to retransmissions and the longer effective epoch duration under wireless loss, rather than payload dissemination (which is excluded in this header-only experiment). Together, Table~\ref{tab:env_comparison_new} and Fig.~\ref{fig:throughput_env_new} quantify the non-negligible wireless overhead and motivate our cross-layer design choices, specifically CALE and deterministic TDMA voting, for stabilizing notarization and finality.

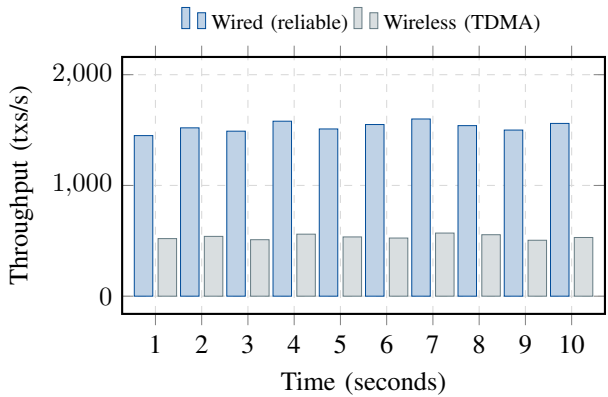
\begin{figure}[t!]
	\centering
	\begin{tikzpicture}
		\begin{axis}[
			width=8cm, height=5cm,
			ybar,
			bar width=7pt,
			enlargelimits=0.08,
			xlabel={Time (seconds)},
			ylabel={Throughput (txs/s)},
			ymin=0, ymax=2000,
			xtick=data,
			symbolic x coords={1,2,3,4,5,6,7,8,9,10},
			grid=major,
			grid style={dashed, gray!30},
			legend style={
				at={(0.5, 1.05)}, anchor=south, legend columns=2,
				draw=none, fill=none, font=\footnotesize
			},
			thick,
			plot coordinates/math parser=false
			]
			\addplot[
			fill=ieee_blue!25,
			draw=ieee_blue,
			line width=0.35pt
			] coordinates {
				(1,1450) (2,1520) (3,1490) (4,1580) (5,1510)
				(6,1550) (7,1600) (8,1540) (9,1500) (10,1560)
			};
			\addlegendentry{Wired (reliable)}
			
			\addplot[
			fill=tech_gray!25,
			draw=tech_gray,
			line width=0.35pt
			] coordinates {
				(1,520) (2,540) (3,510) (4,560) (5,535)
				(6,525) (7,570) (8,555) (9,505) (10,530)
			};
			\addlegendentry{Wireless (TDMA)}
		\end{axis}
	\end{tikzpicture}
	\vspace{-0.4cm}
	\caption{Throughput under wired vs. wireless settings (header-only consensus). Wireless losses reduce throughput due to retransmissions and longer effective epochs.}
		\vspace{-0.4cm}
	\label{fig:throughput_env_new}
\end{figure}

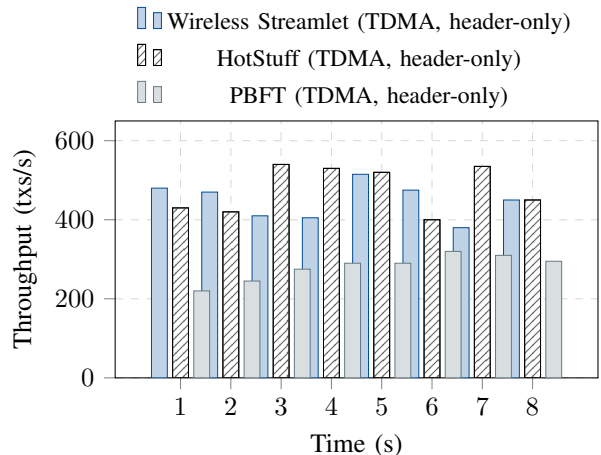
\begin{figure}[t!]  
	\centering
	\begin{tikzpicture}
		\begin{axis}[
			ybar, 
			width=8cm, height=5cm, 
			xlabel={Time (s)}, ylabel={Throughput (txs/s)},
			xmin=0.5, xmax=8.5, 
			ymin=0, ymax=650,
			xtick={1, 2, 3, 4, 5, 6, 7, 8}, 
			grid=major, grid style={dashed, gray!30},
			tick align=outside, tick pos=left,
			bar width=6pt, 
			enlarge x limits=0.1, 
			legend style={
				at={(0.5, 1.01)}, anchor=south,
				legend columns=1,
				draw=none, fill=none,
				font=\small,
				row sep=2pt
			},
			every axis plot/.append style={line width=0.35pt}
			]
			
			\addplot[
			fill=ieee_blue!25, 
			draw=ieee_blue
			] coordinates {
				(1.0, 480) (2.0, 470) (3.0, 410) (4.0, 405) 
				(5.0, 515) (6.0, 475) (7.0, 380) (8.0, 450)
			};
			\addlegendentry{Wireless Streamlet (TDMA, header-only)}			
			\addplot[
			fill=white, 
			draw=black, 
			postaction={pattern=north east lines, pattern color=black!60}
			] coordinates {
				(1.0, 430) (2.0, 420) (3.0, 540) (4.0, 530) 
				(5.0, 520) (6.0, 400) (7.0, 535) (8.0, 450)
			};
			\addlegendentry{HotStuff (TDMA, header-only)}		
			\addplot[
			fill=tech_gray!25, 
			draw=tech_gray
			] coordinates {
				(1.0, 220) (2.0, 245) (3.0, 275) (4.0, 290) 
				(5.0, 290) (6.0, 320) (7.0, 310) (8.0, 295)
			};
			\addlegendentry{PBFT (TDMA, header-only)}		
		\end{axis}
	\end{tikzpicture}
	\vspace{-0.4cm}
	\caption{Throughput comparison under matched TDMA and header-only consensus settings (sampled every 1s).}
		\vspace{-0.4cm}
	\label{fig:throughput_protocols_new}
\end{figure}

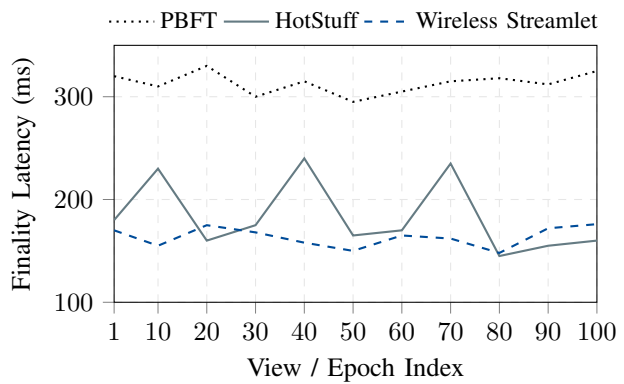
\begin{figure}[t!] 
	\centering
	\begin{tikzpicture}
		\begin{axis}[
			width=8cm, height=5cm,
			xlabel={View / Epoch Index}, ylabel={Finality Latency (ms)},
			xmin=1, xmax=100, ymin=100, ymax=350,
			xtick={1, 10, 20, ..., 100},
			grid=major, grid style={dashed, gray!20},
			tick align=outside, tick pos=left,
			legend style={
				at={(0.5, 1.01)}, anchor=south, legend columns=-1,
				draw=none, fill=none, font=\small
			}
			]
			\addplot[color=black, thick, dotted] coordinates { (1,320)(10,310)(20,330)(30,300)(40,315)(50,295)(60,305)(70,315)(80,318)(90,312)(100,325) };
			\addlegendentry{PBFT}
			\addplot[color=tech_gray, thick] coordinates { (1,180)(10,230)(20,160)(30,175)(40,240)(50,165)(60,170)(70,235)(80,145)(90,155)(100,160) };
			\addlegendentry{HotStuff}
			\addplot[color=ieee_blue, thick, dashed] coordinates { (1,170)(10,155)(20,175)(30,168)(40,158)(50,150)(60,165)(70,162)(80,148)(90,172)(100,176) };
			\addlegendentry{Wireless Streamlet}
		\end{axis}
	\end{tikzpicture}
	\vspace{-0.4cm}
	\caption{Finality latency over 100 views under matched TDMA and header-only settings. Wireless Streamlet shows lower jitter due to deterministic voting slots.}
		\vspace{-0.4cm}
	\label{fig:finality_latency_protocols}
\end{figure}

\subsubsection{Comparison with PBFT and HotStuff (Matched MAC and Message Sizes)}
We compare Wireless Streamlet against PBFT and HotStuff in a 10-node wireless cluster under the same TDMA schedule and header-only message sizes.

Fig.~\ref{fig:throughput_protocols_new} reports real-time throughput. Leader-based protocols (Wireless Streamlet and HotStuff) sustain higher throughput than PBFT under the same airtime/slot budget, because PBFT requires more authenticated exchanges to commit each decision, leaving less effective airtime for transaction progress.

Fig.~\ref{fig:finality_latency_protocols} reports finality latency over 100 views/epochs. Wireless Streamlet exhibits lower variance due to the deterministic TDMA voting structure, which removes random backoff jitter and limits contention.

\subsection{Resource Efficiency and Scalability}
\label{subsec:exp_resource}

Finally, we evaluate the resource efficiency enabled by the dual-chain separation and coded storage.

\subsubsection{Per-node storage reduction} Fig.~\ref{fig:storage_reduction} quantifies the per-node payload storage as the blockchain height increases. Under traditional full replication, each node stores the entire payload stream, causing the per-node burden to grow linearly and quickly reaching 1{,}150{,}000~KB at height 150. In contrast, the proposed coded storage distributes symbols across a storage set, reducing the per-node payload to 230{,}000~KB with 20 storage nodes and further down to 30{,}000~KB with 200 storage nodes at the same height. These results confirm that scaling the storage plane is an effective lever for reducing edge-device storage pressure without sacrificing data availability.

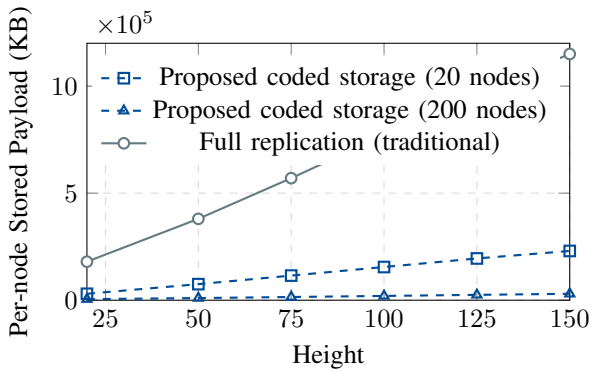
\begin{figure}[t!]
	\centering
	\definecolor{ieee_blue}{RGB}{0, 76, 153}
	\definecolor{tech_gray}{RGB}{101, 123, 131}
	\begin{tikzpicture}
		\begin{axis}[
			width=8cm, height=5cm,
			xlabel={Height}, ylabel={Per-node Stored Payload (KB)},
			xmin=20, xmax=150, ymin=0, ymax=1200000,
			xtick={25, 50, 75, 100, 125, 150},
			scaled y ticks=base 10:-5, ytick scale label code/.code={$\times 10^5$},
			grid=major, grid style={dashed, gray!30},
			legend style={at={(0.01, 0.95)}, anchor=north west, draw=none, fill=white}
			]
			\addplot[color=ieee_blue, dashed, thick, mark=square, mark options={solid, fill=white}] 
			coordinates { (20, 30000) (50, 75000) (75, 115000) (100, 155000) (125, 195000) (150, 230000) };
			\addlegendentry{Proposed coded storage (20 nodes)}
			\addplot[color=ieee_blue, dashed, thick, mark=triangle, mark options={solid, fill=white}] 
			coordinates { (20, 5000) (50, 10000) (75, 15000) (100, 20000) (125, 25000) (150, 30000) };
			\addlegendentry{Proposed coded storage (200 nodes)}
			\addplot[color=tech_gray, thick, mark=*, mark options={rotate=90, fill=white, solid}] 
			coordinates { (20, 180000) (50, 380000) (75, 570000) (100, 770000) (125, 970000) (150, 1150000) };
			\addlegendentry{Full replication (traditional)}
		\end{axis}
	\end{tikzpicture}
	\vspace{-0.4cm}
	\caption{Per-node payload storage vs. blockchain height. Coded storage reduces the \emph{per-node} burden by distributing encoded symbols across more \emph{storage nodes} (participants in the storage plane).}
	\vspace{-0.4cm}
	\label{fig:storage_reduction}
\end{figure}

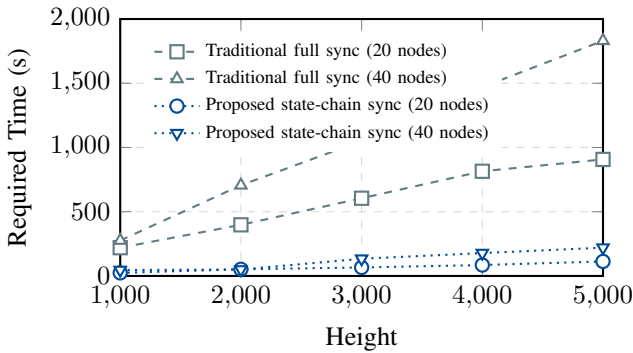
\begin{figure}[t!]
	\centering
	\definecolor{ieee_blue}{RGB}{0, 76, 153}
	\definecolor{tech_gray}{RGB}{101, 123, 131}
	\begin{tikzpicture}
		\begin{axis}[
			width=8cm, height=5cm,
			xlabel={Height}, ylabel={Required Time (s)},     
			xmin=1000, xmax=5000, xtick={1000, 2000, 3000, 4000, 5000},
			ymin=0, ymax=2000,        
			grid=major, grid style={dashed, gray!30}, thick,
			legend style={
				at={(0.05, 0.95)}, anchor=north west, 
				draw=none, fill=white, font=\scriptsize,
				legend columns=1, cells={anchor=west}
			},       
			mark size=2.5pt, mark options={solid, fill=white}
			]
			\addplot[color=tech_gray, dashed, mark=square*] 
			coordinates { (1000, 220) (2000, 398) (3000, 605) (4000, 815) (5000, 907) };
			\addlegendentry{Traditional full sync (20 nodes)}
			\addplot[color=tech_gray, dashed, mark=triangle*] 
			coordinates { (1000, 276) (2000, 706) (3000, 1051) (4000, 1470) (5000, 1831) };
			\addlegendentry{Traditional full sync (40 nodes)}
			\addplot[color=ieee_blue, dotted, mark=*] 
			coordinates { (1000, 24) (2000, 52) (3000, 66) (4000, 85) (5000, 113) };
			\addlegendentry{Proposed state-chain sync (20 nodes)}
			\addplot[color=ieee_blue, dotted, mark=triangle*, mark options={rotate=180, solid, fill=white}] 
			coordinates { (1000, 46) (2000, 50) (3000, 134) (4000, 178) (5000, 220) };
			\addlegendentry{Proposed state-chain sync (40 nodes)}
		\end{axis}
	\end{tikzpicture}
	\vspace{-0.4cm}
	\caption{Bootstrapping time to join the network. The proposed design enables fast synchronization by downloading only the lightweight State Chain; payloads are retrieved on demand from the coded storage plane.}
	\label{fig:joining_time}
	\vspace{-0.4cm}
\end{figure}

\subsubsection{Fast bootstrapping via state-chain-first synchronization}
Fig.~\ref{fig:joining_time} shows the time required for a new node to join and synchronize to a given height. Traditional full synchronization becomes increasingly costly as the height grows, reaching 1831~s at height 5000 with 40 nodes. By downloading only the lightweight State Chain first and retrieving payloads on demand from the coded storage plane, the proposed design reduces the join time to 220~s under the same setting. This separation between header-only consensus and payload storage substantially improves scalability in dynamic wireless deployments where churn is common.

\section{Conclusion}\label{sec:con}
In this paper, we presented \textit{Wireless Streamlet}, a spectrum-aware and cognitive consensus framework tailored for dynamic wireless edge networks. By embedding a verifiable, channel-aware CALE cognitive engine into the consensus logic, we showed how receiver-measured physical-layer channel information (CSI) can guide leader selection to improve block dissemination reliability under fading and interference. The integration of deterministic TDMA scheduling with the single-hop broadcast medium ensures predictable spectral access and linear slot complexity, achieving significantly higher throughput and stability than traditional spectrum-ignorant BFT baselines. To further address the resource constraints of IoT devices, we proposed a coded dual-chain architecture that decouples header-only consensus (State Chain) from payload persistence (Data Chain). This design, powered by erasure coding and on-chain integrity commitments, reduces the per-node storage footprint by up to 97\% while minimizing the redundant spectrum usage required for data retrieval. Collectively, these cross-layer innovations provide a robust foundation for securing cognitive IoT networks, enabling trustworthy coordination in dynamic spectrum environments and supporting emerging distributed edge intelligence workloads.

\ifCLASSOPTIONcaptionsoff
  \newpage
\fi

\bibliographystyle{IEEEtran}
\bibliography{ref}

\end{document}